\documentclass[letterpaper,journal]{IEEEtran}

\usepackage{amsmath,amsfonts}
\usepackage{amsthm}
\usepackage{physics}
\usepackage{amssymb}
\usepackage{algorithm}
\usepackage{algpseudocode}
\usepackage{array}
\usepackage{bm}
\usepackage{centernot}
\usepackage{mathtools}
\usepackage{xcolor}
\usepackage{tikz}
\usetikzlibrary{arrows.meta, positioning, shapes.geometric, shadows}
\usepackage{textcomp}
\usepackage{stfloats}
\usepackage{url}
\usepackage{verbatim}
\usepackage{graphicx}
\usepackage{cite}
\usepackage{pdfpages}
\usepackage{hyperref}
\hypersetup{hidelinks}
\usepackage{booktabs}
\usetikzlibrary{calc}
\usetikzlibrary{3d}

\newtheorem{theorem}{Theorem}
\newtheorem{lemma}{Lemma}
\newtheorem{corollary}{Corollary}
\newtheorem{proposition}{Proposition}
\newtheorem{definition}{Definition}
\begin{document}

\title{%
Access Selection for Finite-SNR Modal Recoverability in Sampled-Wave Receivers}
\author{Shaojie Zhang,~\IEEEmembership{Student             Member,~IEEE}
        and Ozgur B. Akan,~\IEEEmembership{Fellow,~IEEE}
        \thanks{The authors are with the Centre for neXt Communications (CXC), 
Department of Engineering, University of Cambridge, CB3 0FA Cambridge, UK 
(e-mail: sz466@cam.ac.uk; oba21@cam.ac.uk).}% <-this % stops a space
\thanks{Ozgur B. Akan is also with the Centre for neXt Communications (CXC), 
Department of Electrical and Electronics Engineering, Ko\c{c} University, 
34450 Istanbul, T\"{u}rkiye (e-mail: akan@ku.edu.tr).}% <-this % stops a space
}
% The paper headers
%\markboth{Journal of \LaTeX\ Class Files,~Vol.~14, No.~8, August~2021}%
%{Shell \MakeLowercase{\textit{et al.}}: A Sample Article Using IEEEtran.cls for IEEE Journals}

%\IEEEpubid{0000--0000/00\$00.00~\copyright~2021 IEEE}
% Remember, if you use this you must call \IEEEpubidadjcol in the second
% column for its text to clear the IEEEpubid mark.

\maketitle

\begin{abstract}
Large-aperture wave receivers can contain a large number of candidate sensor
locations, antenna ports, or measurement blocks, while hardware and processing
constraints allow only a subset to be activated. % [S1]
In this paper, receiver selection is formulated as a finite-SNR modal-recoverability
problem, where the selected subset is required to support reliable recovery in
every direction of a prescribed modal subspace. % [S2]
However, large trace, log-det, or codebook-distance values alone do not ensure
that every prescribed modal direction is resolved. % [S3]
Specifically, the proposed framework consists of three nested recoverability
criteria for individual modal degrees, a joint target subspace, and target modes
in the presence of active non-target modes. % [S4]
The third criterion, the Schur-complement information floor, provides an
exact worst-direction posterior-error interpretation. % [S5]
We further show that stricter recoverability criteria require
at least as many active receiver accesses and derive tests that
identify reliability targets that remain unattainable even when
all available accesses are activated.
Next, we specialize the framework to finite spherical-wave sampling and compare
greedy receiver-selection rules. % [S7]
Numerical results demonstrate that global log-det is generally more
access-efficient at moderate reliability floors, whereas Schur-based selection
is more effective at stringent floors. % [S8]
While this paper is motivated by holographic and XL-MIMO receivers, the
framework can be applied to general sampled wave systems. % [S9]
\end{abstract}

\begin{IEEEkeywords}
Holographic MIMO, extremely large-scale MIMO, sampled scalar waves, receiver selection, modal-layer admission, finite-SNR reliability, Schur complement, nuisance modes.
\end{IEEEkeywords}

\section{Introduction}

\IEEEPARstart{P}{hysically} large, dense, and software-controlled
apertures have enabled a new class of programmable sampled-wave
receivers for future wireless systems. % [S1]
Holographic MIMO (HMIMO) and extremely large-scale MIMO (XL-MIMO)
architectures can support near-field focusing, integrated sensing and
communications, spatial multiplexing, and programmable electromagnetic
environments~\cite{gong2024hmimo,wei2026eit_hmimo,dong2025debrisense,dong2026martian,dong2026hardware}. % [S2]
For these systems, channel models and estimators must account for
wavefront curvature, aperture size, and geometry-dependent sampling
effects~\cite{cui2022xlmimo,kosasih2023parametric}, while
electromagnetic-information-theoretic formulations characterize the
physical constraints on observable fields~\cite{zhu2024eit}. % [S3]
Therefore, finite-SNR modal recoverability is jointly determined by
the aperture, accessible measurements, sampling geometry, and receiver
noise. % [S4]

In practice, a large-aperture receiver contains a large number of
candidate sensor locations, antenna ports, or measurement blocks, while
hardware and processing constraints allow only a subset to be activated. % [S5]
Therefore, receiver selection for preserving reliable recovery in every
direction of a prescribed modal subspace is crucial. % [S6]
Existing methods, such as log-det, trace/Fisher-information, and
codebook-distance optimization~\cite{krause2008sensor,joshi2009sensor,shamaiah2010greedy,hashemi2021randomized},
can be applied to rank candidate subsets according to aggregate
information. % [S7]
However, optimizing these aggregate objectives alone does not enable
certification of the least-observed modal direction. % [S8]
As a result, a subset can attain a high aggregate score while retaining
a weak or unrecoverable modal direction. % [S9]
The key difference between subset construction and recoverability
certification is that the former selects a candidate subset, whereas
the latter determines its operational adequacy. % [S10]
Given a finite candidate access set, an access budget, a target degree
set, and a reliability threshold, this paper aims to select receiver
accesses and determine whether the required modal components are
recoverable at finite SNR. % [S11]

The proposed formulation is generic and is not restricted to a
particular HMIMO implementation. % [S12]
Spatial degrees of freedom are jointly determined by aperture geometry,
scattering support, and field constraints, rather than by antenna count
alone~\cite{poon2005dof,poon2006scattering}. % [S13]
Wave-channel and electromagnetic formulations characterize field modes
and their coupling strengths~\cite{miller2000waves,migliore2008eit},
while electromagnetic sampling theory describes the associated
degree-of-freedom structure~\cite{pizzo2022nyquist}. % [S14]
Modal representations can be applied to analyze sampled-wave
receivers~\cite{abhayapala2010spherical}. % [S15]
Exact spherical sampling, harmonic transforms, tight frames, and matrix
concentration can be used to construct and analyze receiver
layouts~\cite{driscoll1994sphere,mcewen2011sphere,khalid2014optimal,benedetto2003tight,tropp2012tail}. % [S16]
However, these tools alone do not enable a finite-SNR worst-direction
recoverability test for a selected subset, particularly when active
non-target modes remain coupled to the target subspace. % [S17]
To address this issue, we develop a general sampled scalar-wave model. % [S18]
While this paper is motivated by HMIMO and XL-MIMO, the framework can
be applied to any wave receiver with selectively activated
measurements. % [S19]

The main contributions of this paper are summarized as follows. % [S20]
\begin{itemize}
\item \textit{Problem formulation:}
We formulate receiver selection as a finite-SNR modal recoverability
problem to preserve every required modal direction using a limited
number of activated measurements. % [S21]
Specifically, the proposed framework consists of three increasingly
stringent tests for individual target degrees, the joint target
subspace, and target modes coupled to active non-target modes. % [S22]

\item \textit{Recoverability guarantees:}
We first show that the three tests are nested and that the minimum
required access budget cannot decrease as the test is tightened. % [S23]
Next, we establish an exact worst-direction posterior-error
interpretation for the Schur-complement test and quantify the
information loss caused by target--non-target coupling. % [S24]

\item \textit{Limits of receiver selection:}
We do not treat selector failure and full-set infeasibility as the
same event, but instead distinguish a path-dependent failure from a
limitation of the available measurements. % [S25]
For a monotone library, if the full candidate set fails the required
recoverability test, then no selected subset can satisfy it. % [S26]
However, large trace, log-det, or finite-codebook-distance values
alone do not guarantee recovery of every target modal direction. % [S27]

\item \textit{Design insights:}
To verify the proposed framework, we conduct controlled
spherical-wave experiments and compare degree-wise, target-subspace,
Schur-based, and aggregate selection rules. % [S28]
Numerical results demonstrate that global log-det generally reaches
moderate reliability floors with fewer activated measurements,
whereas Schur-based selection is more effective under stringent
requirements. % [S29]
\end{itemize}

The remainder of this paper is organized as follows. Section~II describes the sampled model. Section~III develops the admission certificates. Section~IV gives the scalar-wave layout laws. Section~V provides the operational guarantees. Section~VI specifies the selection protocol. Numerical results are presented in Section~VII. Section~VIII concludes the paper.

%\begin{figure}[t]
%\centering
%\makebox[\columnwidth][c]%{\includegraphics[width=0.5\textwidth]{figures/modal_recoverability_concept.png}}
%\caption{Certificate-first modal-layer admission workflow.}
%\label{fig:certificate-first-workflow}
%\end{figure}

\section{Sampled Modal Receiver and Recoverability Model}

\subsection{Modal Representation and Receiver Access Library}

We consider a finite modal truncation of order \(L\). The retained index set and its dimension are given by
\[
    \mathcal I_L=\{(\ell,m):0\le \ell\le L,\;-\ell\le m\le \ell\},
    \qquad
    Q=(L+1)^2 .
\]
After truncation, let \(x\) denote the modal coefficient vector of the transmitted or incident field, and let \(C_x\) denote its covariance:
\[
    x\in\mathbb C^Q,
    \qquad
    C_x=\mathbb E[xx^{\mathsf H}]\succeq0 .
\]
Suppose that \(C_x\) is rank deficient. Its compact eigendecomposition is given by
\[
    C_x=U_xP_xU_x^{\mathsf H},
    \qquad
    P_x\succ0,
    \qquad
    r_x=\operatorname{rank}(C_x),
\]
which gives the active-coordinate representation
\[
    x=U_xP_x^{1/2}s,
    \qquad
    \mathbb E[ss^{\mathsf H}]=I_{r_x}.
\]

We consider a receiver access library that consists of the fixed finite set \(\mathcal V\) of physically available accesses before selection. The available accesses can include sensor locations, antenna ports, frequency samples, radii, time gates, and calibrated measurement blocks. The receiver selector selects a subset of the fixed library:
\[
    S\subseteq\mathcal V .
\]
For a selected set \(S\), let \(m_S\) denote the dimension of the stacked receiver output. The corresponding sampled modal matrix is given by
\[
    A_S\in\mathbb C^{m_S\times Q}.
\]
The sampled modal matrix \(A_S\) is determined by receiver geometry, access constraints, and the propagation model, where aperture geometry controls the sampled wavefront structure in field-domain and near-field receivers~\cite{pizzo2022nyquist,cui2022xlmimo}.

\subsection{Sampled Gaussian Model and Recoverability Grams}

For the Gaussian branch, the selected receiver output is modeled as
\[
    y_S=A_Sx+n_S,
    \qquad
    n_S\sim\mathcal{CN}(0,R_S),
    \qquad
    R_S\succ0 ,
\]
where \(R_S\) denotes the selected noise covariance. The covariance \(R_S\) accounts for independent sample noise and correlated receiver noise. By applying receiver-noise whitening, we obtain
\[
    \bar y_S
    =
    R_S^{-1/2}y_S
    =
    R_S^{-1/2}A_Sx+\bar n_S,
    \qquad
    \bar n_S\sim\mathcal{CN}(0,I_{m_S}).
\]
Given the active-coordinate representation of \(x\), the whitened model is given by
\[
    \bar y_S=H(S)s+\bar n_S,
    \qquad
    H(S)=R_S^{-1/2}A_SU_xP_x^{1/2}.
\]

We next define the active source-whitened recoverability Gram as
\[
    K(S)=H(S)^{\mathsf H}H(S).
\]
Let \(F(S)\) and \(G(S)\) denote the unweighted normal Gram and the prior-weighted full-coordinate recoverability Gram, respectively, where
\[
    F(S)=A_S^{\mathsf H}R_S^{-1}A_S,
    \qquad
    G(S)=C_x^{1/2}F(S)C_x^{1/2}.
\]
The matrix \(F(S)\) measures sampled information in physical modal coordinates, while \(G(S)\) includes the modal covariance. When \(C_x\) is singular, \(K(S)\) provides the equivalent recoverability Gram in active coordinates.

\subsection{Modal Subspaces and the Additive-Gram Convention}

For each modal degree \(\ell\), let \(V_\ell\) and \(d_\ell\) denote the associated subspace and its dimension, respectively. % [S1]
\[
    V_\ell=\operatorname{span}\{e_{\ell,m}:m=-\ell,\ldots,\ell\},
    \qquad
    d_\ell=2\ell+1,
\]
Let \(\Pi_\ell\) denote the orthogonal projector onto \(V_\ell\). % [S2]
We assume that the target degree set
\(\mathcal T\subseteq\{0,\ldots,L\}\) is specified before the receiver-selection stage. % [S3]
To evaluate finite-SNR certificates, we restrict \(G(S)\), or \(F(S)\) for an unweighted physical certificate, to the relevant modal subspaces. % [S4]

For access measurements with independent noise blocks, the prior-weighted Gram is given by the following additive form. % [S5]
\[
    G(S)=\sum_{e\in S}B_e,
    \qquad
    B_e=C_x^{1/2}A_e^{\mathsf H}R_e^{-1}A_eC_x^{1/2}\succeq0.
\]
Although this additive form applies to independent noise blocks, we do not impose it on the general model. % [S6]
For correlated receiver noise, shared analog front ends, or joint processing, the full selected covariance \(R_S\) is used to construct \(G(S)\). % [S7]
Lemma~\ref{lem:monotone_growth} provides the sufficient conditions required by the subsequent monotonicity arguments. % [S8]

\section{Finite-SNR Modal Recoverability Criteria}

In this section, we first construct a degree-wise information test for one modal degree. % [S1]
Next, we introduce a target-subspace test for the joint activation of the prescribed modal layers. % [S1]
At last, we develop a Schur-complement test for the setting in which the non-target modes remain Gaussian nuisance coefficients. % [S1]

\subsection{Degree-Wise Information Floors}

Consider a selected access subset \(S\subseteq\mathcal V\). % [S2]
Let \(G(S)\) and \(F(S)\) denote the prior-weighted recoverability Gram and the unweighted normal Gram introduced in Section~II\@, respectively. % [S3]
For modal degree \(\ell\), we define the prior-weighted degree information floor as follows. % [S9]
\begin{equation}
    \mu_\ell(S)
    =
    \lambda_{\min}
    \left(
        \left.
        \Pi_\ell G(S)\Pi_\ell
        \right|_{V_\ell}
    \right)
    =
    \inf_{\substack{u\in V_\ell\\ \|u\|_2=1}}
    u^{\mathsf H}G(S)u.
    \label{eq:mu_degree_certificate}
\end{equation}
Here, \(\Pi_\ell\) denotes the orthogonal projector onto \(V_\ell\). Likewise, the unweighted modal information floor in physical coordinates is given by the following expression. % [S5]
\begin{equation}
    \nu_\ell(S)
    =
    \lambda_{\min}
    \left(
        \left.
        \Pi_\ell F(S)\Pi_\ell
        \right|_{V_\ell}
    \right)
    =
    \inf_{\substack{u\in V_\ell\\ \|u\|_2=1}}
    u^{\mathsf H}F(S)u.
    \label{eq:nu_degree_certificate}
\end{equation}
Here, \(\mu_\ell(S)\) represents the prior-weighted information floor, whereas \(\nu_\ell(S)\) represents the unweighted physical-coordinate floor. % [S6]
In this paper, the admission rules below focus on \(\mu_\ell(S)\), whereas \(\nu_\ell(S)\) is invoked only when an unweighted test is stated explicitly. % [S7]

Given a fixed threshold \(\tau>0\), we use the following degree-\(\ell\) admission condition. % [S8]
\begin{equation}
    \mu_\ell(S)\ge \tau.
    \label{eq:degree_cert_condition}
\end{equation}
Consider a prescribed target degree set
\(T \subseteq \{0,\ldots,L\}\), and define
\begin{equation}
\mu_{\mathrm{deg}}(S)
=
\min_{\ell\in T}\mu_{\ell}(S).
\label{eq:degree_target_floor}
\end{equation}
The subset \(S\) satisfies degree-wise target admission when
\(\mu_{\mathrm{deg}}(S)\geq \tau\). % [S22]

\noindent\emph{Joint target-subspace admission:}
We next define the target modal subspace and its associated projector. % [S4]
\[
V_T=\bigoplus_{\ell\in T}V_\ell, \qquad \Pi_T=\sum_{\ell\in T}\Pi_\ell.
\]
Based on these definitions, we use the following target-subspace information floor. % [S19]
\[
\mu_T(S)=\lambda_{\min}\!\left(\Pi_TG(S)\Pi_T\big|_{V_T}\right).
\]
Given \(\tau\), the subset \(S\) jointly admits the target modal stack when \(\mu_T(S)\ge\tau\). % [S8]
However, this joint condition can fail even if every degree in \(T\) meets its degree-wise requirement. % [S14]

\noindent\emph{Admission with Gaussian nuisance modes:}
We further retain the modal coefficients outside the target subspace. % [S16]
Let \(V_c=V_T^\perp\) denote the non-target subspace within the retained modal coordinates. % [S3]
With respect to \(V_T\oplus V_c\), the prior-weighted Gram is given by the following block form. % [S12]
\[
G(S)=\begin{bmatrix}G_{TT}(S)&G_{Tc}(S)\\G_{cT}(S)&G_{cc}(S)\end{bmatrix}.
\]
According to the Schur complement, we use the following nuisance-conditioned target information matrix. % [S10]
\[
I_{T|c}(S)=G_{TT}(S)-G_{Tc}(S)\big(I+G_{cc}(S)\big)^{-1}G_{cT}(S).
\]
We next define the associated nuisance-conditioned information floor. % [S4]
\[
\mu_{T|c}(S)=\lambda_{\min}\big(I_{T|c}(S)\big).
\]
Given \(\tau\), the subset \(S\) satisfies nuisance-robust target admission when \(\mu_{T|c}(S)\ge\tau\). % [S8]
In this case, the non-target coefficients are modeled as active Gaussian nuisance variables rather than deleted coordinates. % [S25]

\begin{theorem}[Nested modal-admission certificates]
\label{thm:multitier-admission}
Assume that the source-whitened Gaussian model holds. % [S15]
Let the retained coefficient vector be partitioned as \(s=(s_T,s_c)\), where \(s_T\in V_T\) and \(s_c\in V_c\). % [S3]
For a selected access set \(S\), marginalizing \(s_c\) gives the following posterior covariance of \(s_T\). % [S12]
\[
\Sigma_{T|y}(S)=\big(I+I_{T|c}(S)\big)^{-1}.
\]
As a result, the nuisance-robust information-floor condition and the worst posterior-variance condition are equivalent. % [S11]
\[
\mu_{T|c}(S)\ge\tau
\quad\Longleftrightarrow\quad
\lambda_{\max}\big(\Sigma_{T|y}(S)\big)\le (1+\tau)^{-1}.
\]
Moreover, the three information floors obey the following ordering. % [S16]
\[
0\le \mu_{T|c}(S)\le \mu_T(S)\le \min_{\ell\in T}\mu_\ell(S).
\]
Therefore, nuisance-robust admission guarantees joint target-subspace admission, and joint target-subspace admission guarantees degree-wise admission. % [S12]
However, the two reverse implications are not guaranteed in general. % [S14]
\end{theorem}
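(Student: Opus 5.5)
The plan is to work in the source-whitened coordinates. Here the prior on the retained coefficients is \(s\sim\mathcal{CN}(0,I)\), and the likelihood information is the Gram \(G(S)\), or equivalently \(K(S)\) in active coordinates; we identify the two under the theorem's partition convention. Three steps follow: a posterior-covariance computation, a spectral equivalence, and a chain of Rayleigh-quotient comparisons.

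First, because the model is linear Gaussian with identity prior, the joint posterior precision of \(s=(s_T,s_c)\) is \(I+G(S)\). In the \(V_T\oplus V_c\) block form this reads
\[
\begin{bmatrix} I+G_{TT}(S) & G_{Tc}(S)\\ G_{cT}(S) & I+G_{cc}(S)\end{bmatrix}.
\]
Marginalizing \(s_c\) corresponds to taking the \(TT\) block of the inverse of this matrix. By the block-inversion formula, that block equals the inverse of the Schur complement of the \(cc\) block, which is
\[
I+G_{TT}(S)-G_{Tc}(S)\big(I+G_{cc}(S)\big)^{-1}G_{cT}(S)=I+I_{T|c}(S).
\]
The inversion is legitimate because \(I+G_{cc}(S)\succ0\) and \(I+G(S)\succ0\). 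This gives \(\Sigma_{T|y}(S)=(I+I_{T|c}(S))^{-1}\).

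Second, the map \(t\mapsto(1+t)^{-1}\) is strictly decreasing on \(t>-1\), and \(I_{T|c}(S)\) is Hermitian with \(I+I_{T|c}(S)\succ0\). Hence \(\lambda_{\max}(\Sigma_{T|y}(S))=(1+\mu_{T|c}(S))^{-1}\), and the claimed equivalence follows immediately.

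Third, I will prove the ordering. For the lower bound, \(G(S)\succeq0\) implies \(I+G(S)\succeq I\). Inverting and taking the \(TT\) block gives \(\Sigma_{T|y}(S)\preceq I\), so \(I_{T|c}(S)\succeq0\). A direct argument works too: \(I_{T|c}\) is the Schur complement of \(I+G_{cc}\) in \(G+\mathrm{diag}(0,I)\succeq0\).

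For the middle inequality, the subtracted term \(G_{Tc}(I+G_{cc})^{-1}G_{cT}\) is positive semidefinite. Thus \(I_{T|c}(S)\preceq G_{TT}(S)=\Pi_TG(S)\Pi_T|_{V_T}\), and Weyl monotonicity gives \(\mu_{T|c}(S)\le\mu_T(S)\).

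For the last inequality, \(V_\ell\subseteq V_T\) for each \(\ell\in T\). The Rayleigh-quotient characterization therefore takes an infimum over a larger set for \(\mu_T\) than for \(\mu_\ell\), so \(\mu_T(S)\le\mu_\ell(S)\); minimizing over \(\ell\in T\) gives the bound by \(\min_{\ell\in T}\mu_\ell(S)\). The admission implications follow directly from the chain with a common \(\tau\).

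Finally, I will show the reverse implications fail with small explicit counterexamples, say \(T=\{0,1\}\) with low-dimensional Grams.

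For \(\mu_T<\min_\ell\mu_\ell\), take \(G_{TT}\) whose diagonal blocks on \(V_0\) and \(V_1\) are large but which has strong off-diagonal coupling. One example is a \(2\times2\) restriction to \(\operatorname{span}\{e_{0,0},e_{1,0}\}\) equal to \(a\begin{bmatrix}1&1\\1&1\end{bmatrix}\), padded by \(aI\) on the remaining directions. Each degree block then has floor \(a\ge\tau\), but the direction \(e_{0,0}-e_{1,0}\) gives \(\mu_T=0\).

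For \(\mu_{T|c}<\mu_T\), take \(G_{TT}=aI\) and a rank-one \(G\) coupling a target direction \(u\) with a nuisance direction \(w\), for example \(G\supseteq b(u+w)(u+w)^{\mathsf H}\). For large \(b\), the Schur complement in direction \(u\) is roughly \(a+b-b^2/(1+b)\to a+1\) while \(\mu_T=a\); this gap is too small, so the construction needs adjusting.

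I will adjust it by making the coupling dominate: choose \(G_{TT}=\tau I\) and a nuisance block small enough that the cross term nearly cancels. For instance, the \(2\times2\) block \(\begin{bmatrix}\tau&c\\c&g\end{bmatrix}\) with \(c^2\le\tau g\), which keeps it PSD, gives a Schur value \(\tau-c^2/(1+g)\). With \(c^2=\tau g\) this equals \(\tau/(1+g)<\tau\).

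The main obstacle is keeping the counterexamples valid, meaning positive semidefinite and realizable as \(C_x^{1/2}FC_x^{1/2}\). This is handled by choosing \(C_x=I\) and \(F\) equal to the stated matrix, which is realizable by taking \(A_S=F^{1/2}\) and \(R_S=I\).
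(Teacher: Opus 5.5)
Your proof follows the paper's argument: joint posterior precision $I+G(S)$, block inversion through the Schur complement of the nuisance block, removal of a positive semidefinite correction for $\mu_{T|c}\le\mu_T$, restriction from $V_T$ to $V_\ell\subseteq V_T$ (the paper's Cauchy interlacing step) for $\mu_T\le\mu_\ell$, and Schur-complement positivity for nonnegativity. Your explicit counterexamples go beyond the paper, which only asserts that the reverse implications fail. The rank-one coupled $G_{TT}$ and the adjusted block with $c^2=\tau g$, which gives $\mu_{T|c}=\tau/(1+g)<\tau=\mu_T$, are both valid, but you should delete the abandoned first attempt: its Schur value in direction $u$ actually exceeds $\mu_T$, so it never produces the gap.
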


\begin{IEEEproof}
See Appendix~\ref{app:proof_multitier_admission}.
\end{IEEEproof}

\begin{corollary}[Intermodal coupling penalties and nested access budgets]
\label{cor:coupling-tax}
Let \(G_{\ell j}(S)=\Pi_\ell G(S)\Pi_j|_{V_j\to V_\ell}\) denote the information block from \(V_j\) to \(V_\ell\). % [S3]
Applying the block eigenvalue bound gives the following lower bound. % [S18]
\[
\mu_T(S)\ge
\min_{\ell\in T}\left[\mu_\ell(S)-\sum_{j\in T,\,j\ne\ell}\|G_{\ell j}(S)\|_2\right].
\]
Applying the spectral-norm bound to the Schur correction gives the following lower bound. % [S18]
\[
\mu_{T|c}(S)\ge \mu_T(S)-\|G_{Tc}(S)(I+G_{cc}(S))^{-1}G_{cT}(S)\|_2.
\]
To compare the access requirements, we define the minimum feasible budgets as follows. % [S9]
\begin{align*}
K_{\deg}(\tau)
&=\min\{|S|:\min_{\ell\in T}\mu_\ell(S)\ge\tau\},\\
K_T(\tau)
&=\min\{|S|:\mu_T(S)\ge\tau\},\\
K_{T|c}(\tau)
&=\min\{|S|:\mu_{T|c}(S)\ge\tau\}.
\end{align*}
If a corresponding feasible set is empty, its minimum budget is set to \(\infty\). % [S20]
Therefore, the three access budgets satisfy the following nesting relation. % [S12]
\[
K_{\deg}(\tau)\le K_T(\tau)\le K_{T|c}(\tau).
\]
\end{corollary}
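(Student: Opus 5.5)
We prove the corollary in three independent parts: the two lower bounds, which are linear-algebra perturbation statements, and the budget nesting, which follows from Theorem~\ref{thm:multitier-admission}.

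\emph{First bound (block eigenvalue bound).} Write \(G_{TT}(S)=\Pi_TG(S)\Pi_T|_{V_T}\) in block form with respect to \(V_T=\bigoplus_{\ell\in T}V_\ell\). Its diagonal blocks are \(G_{\ell\ell}(S)\), and \(\lambda_{\min}(G_{\ell\ell}(S))=\mu_\ell(S)\) by \eqref{eq:mu_degree_certificate}.

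I would use a block Gershgorin argument, avoiding reliance on a named theorem. Let \(u=\sum_{\ell\in T}u_\ell\) be a unit eigenvector of \(G_{TT}\) for \(\lambda=\mu_T(S)\). Choose \(\ell\) maximizing \(\|u_\ell\|_2\); this is nonzero. Project the eigen-equation onto \(V_\ell\):
\[
(G_{\ell\ell}-\lambda I)u_\ell=-\sum_{j\ne\ell}G_{\ell j}u_j .
\]
Take the inner product with \(u_\ell\) and use \(u_\ell^{\mathsf H}G_{\ell\ell}u_\ell\ge\mu_\ell\|u_\ell\|^2\). This gives
\[
(\mu_\ell-\lambda)\|u_\ell\|^2\le\sum_{j\ne\ell}\|G_{\ell j}\|_2\|u_j\|\|u_\ell\|\le\sum_{j\ne\ell}\|G_{\ell j}\|_2\|u_\ell\|^2 .
\]
Hence \(\lambda\ge\mu_\ell-\sum_{j\ne\ell}\|G_{\ell j}\|_2\) for that particular \(\ell\), which is at least the minimum over \(\ell\). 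The only delicacy is handling the maximizing index correctly. Since \(G\) is Hermitian PSD, the eigenvector exists and all quantities are real.

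\emph{Second bound.} By definition,
\[
I_{T|c}=G_{TT}-M,\qquad M=G_{Tc}(I+G_{cc})^{-1}G_{cT}\succeq0,
\]
and \(M\) is Hermitian because \((I+G_{cc})^{-1}\succ0\). Weyl's inequality, \(\lambda_{\min}(X-M)\ge\lambda_{\min}(X)-\lambda_{\max}(M)\), with \(\lambda_{\max}(M)=\|M\|_2\), gives
\[
\mu_{T|c}\ge\mu_T-\|M\|_2 .
\]
Equivalently, for a unit \(u\in V_T\), \(u^{\mathsf H}I_{T|c}u\ge u^{\mathsf H}G_{TT}u-\|M\|_2\); minimizing over \(u\) gives the same bound.

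\emph{Budget nesting.} Theorem~\ref{thm:multitier-admission} gives, for every \(S\),
\[
\mu_{T|c}(S)\le\mu_T(S)\le\min_{\ell\in T}\mu_\ell(S).
\]
Therefore the feasible families satisfy
\[
\{S:\mu_{T|c}(S)\ge\tau\}\subseteq\{S:\mu_T(S)\ge\tau\}\subseteq\{S:\min_{\ell\in T}\mu_\ell(S)\ge\tau\}.
\]
A minimum of \(|S|\) over a larger family is no larger, so \(K_{\deg}(\tau)\le K_T(\tau)\le K_{T|c}(\tau)\). The convention \(\min\emptyset=\infty\) keeps this valid when the smaller families are empty: if an inner family is empty its budget is \(\infty\), and any comparison with it holds trivially. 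No monotonicity of \(G(S)\) in \(S\) is needed here, because the inclusions are pointwise in \(S\).

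The main obstacle is the first bound. The rest reduces to Weyl's inequality and the pointwise ordering already established in Theorem~\ref{thm:multitier-admission}.
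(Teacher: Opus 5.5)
Your proof is correct. The second bound (Weyl's inequality with \(\lambda_{\max}(M)=\|M\|_2\)) and the budget nesting (pointwise feasible-set inclusions from Theorem~\ref{thm:multitier-admission}, with \(\min\emptyset=\infty\)) match the paper's argument. Only the first bound takes a different route.

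The paper argues variationally. For an arbitrary unit \(u=\oplus_{\ell\in T}u_\ell\in V_T\), it bounds the diagonal terms of \(u^{\mathsf H}G_{TT}u\) below by \(\mu_\ell\|u_\ell\|^2\) and the cross terms by \(\|G_{\ell j}\|_2\|u_\ell\|\|u_j\|\). To reach the stated row-sum form it then needs a step it leaves implicit: \(2\|u_\ell\|\|u_j\|\le\|u_\ell\|^2+\|u_j\|^2\), combined with \(\|G_{j\ell}\|_2=\|G_{\ell j}\|_2\) from Hermitian symmetry. This collapses the cross terms into \(\sum_{\ell}\|u_\ell\|^2\sum_{j\ne\ell}\|G_{\ell j}\|_2\).

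You instead use the Feingold--Varga eigenvector argument. You take the minimizing eigenvector, project the eigen-equation onto the block with the largest component, and use \(\|u_j\|\le\|u_\ell\|\).

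What each route buys:
\begin{itemize}
\item Your route avoids the symmetrization step entirely. It also gives a slightly sharper localization: the bound holds for the specific block that dominates the minimizing eigenvector.
\item The paper's route bounds every Rayleigh quotient on \(V_T\), which matches the variational definition of \(\mu_T\). The cost is the extra AM--GM/symmetry step that the paper does not write out.
\end{itemize}
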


\begin{IEEEproof}
See Appendix~\ref{app:proof_coupling_tax}.
\end{IEEEproof}

As a result, the three criteria provide a degree-restricted E-optimal information floor and a fixed-library finite-SNR feasibility gate for modal communication layers. % [S17]

Due to a non-degree-isotropic source covariance \(C_x\), the values of \(\mu_\ell(S)\) and \(\nu_\ell(S)\) can be different. % [S21]
Consider degree-isotropic covariance blocks with no cross-degree covariance. % [S2]
\begin{equation}
    C_x\big|_{V_\ell}=P_\ell I_{2\ell+1}.
\end{equation}
Under this covariance model, the relation between the two floors is given by the following expression. % [S12]
\begin{equation}
    \mu_\ell(S)=P_\ell\nu_\ell(S).
    \label{eq:mu_nu_relation}
\end{equation}
If cross-degree covariance is present, \(\mu_\ell(S)\) continues to incorporate the prior weighting. % [S20]
Therefore, physical degree-amplitude recovery uses \(\nu_\ell(S)\) or the corresponding Schur certificate in physical coordinates. % [S12]

\subsection{Monotonicity of Fixed Access Libraries}

The full-set infeasibility result is based on monotone information growth. % [S1]

\begin{definition}[Monotone finite candidate measurement set]
A finite candidate measurement set \(\mathcal V\) is referred to as \(G\)-monotone under the condition below. % [S2]
For every \(S\subseteq S'\subseteq\mathcal V\), the defining matrix ordering takes the following form. % [S3]
\begin{equation}
    G(S)\preceq G(S').
\end{equation}
Likewise, the \(F\)-monotonicity condition is given by \(F(S)\preceq F(S')\) for every \(S\subseteq S'\subseteq\mathcal V\). % [S4]
\end{definition}

The two standard sufficient cases are summarized below. % [S5]

\begin{lemma}[Sufficient conditions for monotone information growth]
\label{lem:monotone_growth}
Each of the following conditions is sufficient for the fixed
library to be \(G\)-monotone.

\emph{1) Additive independent access blocks:}
Suppose
\begin{equation}
    G(S)=\sum_{e\in S}B_e,
    \qquad
    B_e\succeq0.
\end{equation}
Then \(S\subseteq S'\) implies
\[
    G(S)\preceq G(S').
\]

\emph{2) Nested marginals of a fixed jointly Gaussian observation:}
Suppose the full-library observation satisfies
\begin{equation}
    y_{\mathcal V}=A_{\mathcal V}x+n_{\mathcal V},
    \qquad
    n_{\mathcal V}\sim\mathcal{CN}(0,R_{\mathcal V}),
    \qquad
    R_{\mathcal V}\succ0,
\end{equation}
where \(R_{\mathcal V}\) is independent of \(x\). If the
observations indexed by \(S\subseteq S'\subseteq\mathcal V\)
are nested marginals of this full observation, then
\begin{equation}
    F(S)\preceq F(S'),
    \qquad
    G(S)\preceq G(S').
\end{equation}
\end{lemma}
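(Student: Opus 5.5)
Both parts reduce to the Loewner ordering, and we will also use that congruence preserves it: if \(M\preceq M'\) then \(C_x^{1/2}MC_x^{1/2}\preceq C_x^{1/2}M'C_x^{1/2}\).

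\emph{Part 1.} Since \(S\subseteq S'\), we can write
\[
G(S')-G(S)=\sum_{e\in S'\setminus S}B_e .
\]
This is a finite sum of positive semidefinite matrices, so it is positive semidefinite. The claim \(G(S)\preceq G(S')\) follows immediately; nothing further is needed.

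\emph{Part 2.} First prove \(F\)-monotonicity, then transfer it to \(G\) by congruence. Order the coordinates of \(y_{S'}\) so that \(y_{S'}=(y_S,y_D)\) with \(D=S'\setminus S\). Since the observations are marginals of the full observation, \(A_{S'}=\begin{bmatrix}A_S\\ A_D\end{bmatrix}\) and \(R_{S'}=\begin{bmatrix}R_S&R_{SD}\\ R_{DS}&R_D\end{bmatrix}\), where \(R_S\) is the corresponding principal submatrix of \(R_{\mathcal V}\). This matrix is positive definite because \(R_{\mathcal V}\succ0\).

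Apply the block inverse. Let \(\Delta=R_D-R_{DS}R_S^{-1}R_{SD}\succ0\). Then
\[
R_{S'}^{-1}=\begin{bmatrix}R_S^{-1}&0\\0&0\end{bmatrix}+W^{\mathsf H}\Delta^{-1}W,
\qquad
W=\begin{bmatrix}-R_{DS}R_S^{-1}& I\end{bmatrix}.
\]
Pre- and post-multiplying by \(A_{S'}^{\mathsf H}\) and \(A_{S'}\) gives
\[
F(S')=F(S)+\tilde A^{\mathsf H}\Delta^{-1}\tilde A,
\qquad
\tilde A=A_D-R_{DS}R_S^{-1}A_S .
\]
The correction term is positive semidefinite, so \(F(S)\preceq F(S')\). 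Interpreted statistically, \(\tilde A x+(n_D-R_{DS}R_S^{-1}n_S)\) is the innovation of \(y_D\) given \(y_S\), and the correction is its Fisher information.

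An alternative route is variational. Because the noise is Gaussian, \(u^{\mathsf H}F(S)u\) is the Fisher information of \(y_S\) about the scalar parameter along \(u\). Discarding components of the observation cannot increase Fisher information. Concretely,
\[
u^{\mathsf H}F(S)u=\sup_{w}\frac{|w^{\mathsf H}A_Su|^2}{w^{\mathsf H}R_Sw},
\]
and any \(w\) supported on \(S\) can be zero-padded to a vector supported on \(S'\) with the same ratio, which again gives \(F(S)\preceq F(S')\).

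The \(G\)-statement follows by congruence with \(C_x^{1/2}\). The independence of \(n_{\mathcal V}\) from \(x\) is used only to ensure that \(F\) is the correct Gram for the model.

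The only nontrivial step is the Schur-complement identity for \(R_{S'}^{-1}\). The main care needed there is that the principal submatrices \(R_S\) and \(R_D\) and the Schur complement \(\Delta\) are all positive definite, which follows from \(R_{\mathcal V}\succ0\). Everything else is bookkeeping of the coordinate ordering.
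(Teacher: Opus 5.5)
Your proof is correct and essentially the paper's: part~1 is identical, and in part~2 your block-inverse identity yields the paper's increment \(F(S')-F(S)=A_{\mathcal D|S}^{\mathsf H}R_{\mathcal D|S}^{-1}A_{\mathcal D|S}\succeq0\) with the same Schur complement \(R_{\mathcal D|S}\), followed by congruence with \(C_x^{1/2}\); the paper obtains this increment from the likelihood factorization \(p(y_{S'}\mid x)=p(y_S\mid x)\,p(y_{\mathcal D}\mid y_S,x)\). Your extra variational zero-padding argument is also valid and is a slightly more elementary route that avoids the block inverse altogether.
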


\begin{IEEEproof}
See Appendix~\ref{app:proof_monotonicity}.
\end{IEEEproof}

\begin{corollary}[Full-library infeasibility]
\label{cor:full_library_infeasibility}
Suppose that the finite candidate measurement set \(\mathcal V\) is
\(G\)-monotone. Then, for every \(S\subseteq\mathcal V\),
\[
    \mu_\ell(S)\leq\mu_\ell(\mathcal V),
    \qquad
    \mu_{\mathcal T}(S)\leq\mu_{\mathcal T}(\mathcal V),
    \qquad
    \mu_{\mathcal T|c}(S)\leq\mu_{\mathcal T|c}(\mathcal V).
\]
Consequently, if the full library fails any prescribed
degree-wise, target-subspace, or nuisance-robust threshold,
no subset selector over \(\mathcal V\) can satisfy the
corresponding certificate.

If \(\mathcal V\) is \(F\)-monotone, then
\[
    \nu_\ell(S)\leq\nu_\ell(\mathcal V),
    \qquad S\subseteq\mathcal V,
\]
and the analogous full-library infeasibility statement holds
for the unweighted physical certificate.
\end{corollary}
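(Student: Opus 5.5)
The plan is to prove all three inequalities from the single ordering \(G(S)\preceq G(\mathcal V)\), which \(G\)-monotonicity gives for every \(S\subseteq\mathcal V\), and then read off the infeasibility statements by contraposition.

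\textbf{Degree-wise and target-subspace floors.} For \(\mu_\ell\) and \(\mu_{\mathcal T}\) I will use the Rayleigh-quotient form in \eqref{eq:mu_degree_certificate}. Fix a unit vector \(u\in V_\ell\). Then \(u^{\mathsf H}G(S)u\le u^{\mathsf H}G(\mathcal V)u\). Taking the infimum over such \(u\) on both sides gives \(\mu_\ell(S)\le\mu_\ell(\mathcal V)\). The same argument with \(u\) ranging over the unit sphere of \(V_{\mathcal T}\) gives \(\mu_{\mathcal T}(S)\le\mu_{\mathcal T}(\mathcal V)\). Equivalently, compressing \(G(\mathcal V)-G(S)\succeq0\) to a subspace preserves positive semidefiniteness, and Weyl monotonicity of \(\lambda_{\min}\) finishes the step.

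\textbf{Schur floor.} This is the only non-routine step, because \(I_{\mathcal T|c}\) is not linear in \(G\). I plan to use the identity
\[
\big(I+I_{\mathcal T|c}(S)\big)^{-1}=\big[(I+G(S))^{-1}\big]_{\mathcal T\mathcal T}.
\]
This is the block-inverse (Schur complement) formula applied to \(I+G(S)\). Theorem~\ref{thm:multitier-admission} already gives it as the posterior covariance \(\Sigma_{\mathcal T|y}(S)\). From \(I+G(S)\preceq I+G(\mathcal V)\) with both matrices positive definite, operator antitonicity of inversion gives \((I+G(\mathcal V))^{-1}\preceq(I+G(S))^{-1}\). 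Compressing to the \(\mathcal T\mathcal T\) block preserves this order. Inverting once more yields
\[
I+I_{\mathcal T|c}(S)\preceq I+I_{\mathcal T|c}(\mathcal V),
\]
so \(I_{\mathcal T|c}(S)\preceq I_{\mathcal T|c}(\mathcal V)\), and taking \(\lambda_{\min}\) gives \(\mu_{\mathcal T|c}(S)\le\mu_{\mathcal T|c}(\mathcal V)\).

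An alternative route for this step is the variational characterization
\[
v^{\mathsf H}\big(I+I_{\mathcal T|c}\big)v=\min_{w}\,(v,w)^{\mathsf H}(I+G)(v,w),
\]
which is monotone in \(G\) pointwise in \((v,w)\). Either route works; I would present the block-inverse one, since it reuses the posterior-covariance form from Theorem~\ref{thm:multitier-admission}.

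\textbf{Infeasibility consequences.} Suppose the full library fails a threshold, say \(\mu_\bullet(\mathcal V)<\tau\) for one of the three floors. Then for every \(S\subseteq\mathcal V\) we have \(\mu_\bullet(S)\le\mu_\bullet(\mathcal V)<\tau\). Hence no selector, greedy or otherwise, can return a set that certifies. In the notation of Corollary~\ref{cor:coupling-tax}, the corresponding minimum budget is \(\infty\).

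\textbf{Unweighted certificate.} The argument for \(\nu_\ell\) under \(F\)-monotonicity is the same Rayleigh-quotient step applied to \(F\) with \eqref{eq:nu_degree_certificate}. The analogous infeasibility statement follows in the same way. An unweighted Schur certificate, if defined with \(F\) in place of \(G\), inherits monotonicity by the same block-inverse argument.
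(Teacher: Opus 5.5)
Your proposal is correct and follows essentially the same route as the paper: Rayleigh--Ritz monotonicity gives the bounds on \(\mu_\ell\), \(\mu_{\mathcal T}\), and \(\nu_\ell\). For the Schur floor you use the identity \([(I+G)^{-1}]_{\mathcal T\mathcal T}=(I+I_{\mathcal T|c})^{-1}\) together with Loewner antitonicity of inversion applied twice, and contraposition then yields the infeasibility claims. One small point in your favour: the paper cites Lemma~\ref{lem:monotone_growth} for \(G(S)\preceq G(\mathcal V)\), whereas you read this ordering directly off the \(G\)-monotonicity hypothesis, which is the more accurate justification.
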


\begin{IEEEproof}
See Appendix~\ref{app:proof_full_library}.
\end{IEEEproof}

\subsection{Separation Between Aggregate Objectives and Admission Certificates}
\label{subsec:aggregate-separation}

The multi-tier information floors introduced above are used to determine finite-SNR feasibility for modal admission. % [B1]
Common aggregate objectives can be applied to rank candidate access subsets. % [B2]
However, their objective values alone do not provide a degree-wise modal-admission certificate. % [B3]
Let \(G_\ell(S)\) denote the restriction of the prior-weighted Gram to degree \(\ell\). % [B4]
\begin{equation}
G_\ell(S)=\Pi_\ell G(S)\Pi_\ell|_{V_\ell}.
\end{equation}
For a finite source-whitened codeword-difference set
\(\mathcal D\subset V_\ell\setminus\{0\}\), we define the following three aggregate objectives. % [B5]
\begin{align}
\Phi_{\rm tr}(S) &= \operatorname{tr}G_\ell(S),\\
\Phi_{\log}(S) &= \log_2\det(I+G_\ell(S)),\\
\Phi_{\mathcal D}(S) &= \min_{\Delta\in\mathcal D}\Delta^H G_\ell(S)\Delta .
\end{align}

For physical-coordinate codebooks, let \(F_\ell(S)=\Pi_\ell F(S)\Pi_\ell|_{V_\ell}\). % [B6]
If \(G_\ell(S)=P_\ell F_\ell(S)\), then the prior-weighted and physical codebook-distance objectives differ only by the positive scalar \(P_\ell\). % [B7]

\begin{theorem}[Aggregate-objective separation]
\label{thm:aggregate-certification-separation}
Consider a modal degree \(\ell\ge 1\), dimension \(d_\ell=2\ell+1\), degree power \(P_\ell>0\), access budget \(K\ge d_\ell\), threshold \(\tau>0\), finite nonzero difference set \(\mathcal D\subset V_\ell\setminus\{0\}\), and objective gap \(R>0\). % [B8]
Under these parameters, there exist a distributed subset \(S_{\rm sp}\) and an ideal repeated-access clustered limit \(S_{\rm cl}^{\rm rep}\) with \(|S_{\rm cl}^{\rm rep}|=|S_{\rm sp}|=K\). % [B9]
The associated modal-floor relations are given below. % [B10]
\begin{equation}
\mu_\ell(S_{\rm sp})\ge \tau,
\qquad
\mu_\ell(S_{\rm cl}^{\rm rep})=0.
\end{equation}
The corresponding aggregate-objective gaps are given below. % [B11]
\begin{align}
\Phi_{\rm tr}(S_{\rm cl}^{\rm rep}) &> \Phi_{\rm tr}(S_{\rm sp})+R,\\
\Phi_{\log}(S_{\rm cl}^{\rm rep}) &> \Phi_{\log}(S_{\rm sp})+R,\\
\Phi_{\mathcal D}(S_{\rm cl}^{\rm rep}) &> \Phi_{\mathcal D}(S_{\rm sp})+R.
\end{align}
Given any \(\epsilon>0\), the angular cap can be chosen sufficiently small for the physical distinct-access construction. % [B12]
To obtain a finite scalar-wave access library, the repeated clustered accesses are replaced by \(K\) distinct accesses within this cap. % [B13]
As a result, the finite scalar-wave subset \(S_{\rm cl}\) satisfies \(|S_{\rm cl}|=K\) and \(0\le \mu_\ell(S_{\rm cl})<\epsilon\), while retaining the three aggregate-objective gaps. % [B14]
By applying the continuity of the eigenvalue and aggregate-objective maps, the three strict inequalities persist under this distinct-access replacement. % [B15]
Accordingly, repeated identical angular responses specify the rank-deficient limiting construction, while distinct accesses realize the construction in a physical finite library. % [B16]
\end{theorem}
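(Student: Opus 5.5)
We prove the statement by explicit construction inside the degree-$\ell$ block. We work directly with the restricted Gram $G_\ell(S)$. Under the degree-isotropic model, $G_\ell(S)=P_\ell F_\ell(S)$. For a scalar-wave access at unit direction $\hat r_e$ with whitened gain $g_e>0$, the degree-$\ell$ contribution is rank one:
\[
B_{e,\ell}=P_\ell g_e\, a_\ell(\hat r_e)a_\ell(\hat r_e)^{\mathsf H},
\qquad
a_\ell(\hat r)=\bigl(Y_{\ell m}(\hat r)\bigr)_{m=-\ell}^{\ell}.
\]
By the addition theorem, $\|a_\ell(\hat r)\|_2^2=d_\ell/(4\pi)$ for every $\hat r$. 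Both subsets are built from such rank-one blocks with the additive form of Lemma~\ref{lem:monotone_growth}. The per-access gain is left as a free scaling parameter: the clustered set gets gain $\gamma$, and the distributed set gets a fixed gain $g_0$.

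\emph{Distributed subset.} Choose $K\ge d_\ell$ directions forming a degree-$\ell$ spherical design, or any set whose $a_\ell$ vectors span $\mathbb C^{d_\ell}$. Then $\sum_e a_\ell a_\ell^{\mathsf H}\succ0$, and for a design it equals $\tfrac{K}{4\pi}I$. Taking $g_0$ large enough gives $\mu_\ell(S_{\rm sp})\ge\tau$. The three objectives at $S_{\rm sp}$ are then fixed finite numbers $t_0,\,l_0,\,\delta_0$.

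\emph{Clustered limit.} Repeat a single direction $\hat r_0$ $K$ times with gain $\gamma$. This gives
\[
G_\ell(S^{\rm rep}_{\rm cl})=P_\ell\gamma K\,a a^{\mathsf H}.
\]
Since $d_\ell\ge 3$ this matrix has rank one, so $\mu_\ell=0$. Its trace is $P_\ell\gamma K d_\ell/(4\pi)\to\infty$ as $\gamma\to\infty$. Its log-det is $\log_2(1+P_\ell\gamma K\|a\|^2)\to\infty$ as well. For the codebook distance, $\Phi_{\mathcal D}=P_\ell\gamma K\min_{\Delta\in\mathcal D}|a^{\mathsf H}\Delta|^2$. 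This diverges provided $\hat r_0$ is chosen so that $a_\ell(\hat r_0)^{\mathsf H}\Delta\ne0$ for every $\Delta$ in the finite set $\mathcal D$. That choice is the main obstacle, and the plan for it is the following. Each map $\hat r\mapsto a_\ell(\hat r)^{\mathsf H}\Delta=\sum_m \overline{Y_{\ell m}(\hat r)}\Delta_m$ is a nonzero spherical harmonic of degree $\ell$; it is nonzero because the $Y_{\ell m}$ are linearly independent and $\Delta\ne0$. Its zero set therefore has measure zero on the sphere. A finite union of measure-zero sets is still measure zero, so an admissible $\hat r_0$ exists. Once $\hat r_0$ is fixed, choose $\gamma$ so that all three objectives exceed $t_0+R$, $l_0+R$, and $\delta_0+R$.

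\emph{Distinct-access replacement.} Replace the $K$ copies by distinct directions $\hat r_1,\dots,\hat r_K$ in a cap of radius $\rho$ around $\hat r_0$. The map $\hat r\mapsto a_\ell(\hat r)$ is continuous, so $G_\ell(S_{\rm cl})\to G_\ell(S^{\rm rep}_{\rm cl})$ as $\rho\to0$. Weyl's inequality then gives $\mu_\ell(S_{\rm cl})\le\mu_\ell(S^{\rm rep}_{\rm cl})+\|G_\ell(S_{\rm cl})-G_\ell(S^{\rm rep}_{\rm cl})\|_2<\epsilon$ for small $\rho$, and $\mu_\ell(S_{\rm cl})\ge0$ because $G_\ell\succeq0$. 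The objectives $\operatorname{tr}$, $\log_2\det(I+\cdot)$, and the finite minimum of quadratic forms are all continuous in $G_\ell$. The three strict inequalities therefore persist for $\rho$ small enough. If the library has a fixed gain, the large $\gamma$ can instead be realised by keeping the accesses' SNR per block large; we state this as the scaling assumption built into the construction.
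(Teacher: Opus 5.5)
Your proposal is correct and follows essentially the same route as the paper's proof. The paper also uses a spanning, hence positive-definite, distributed layout whose weight is scaled to meet $\tau$. It also uses a repeated direction $\Omega_0$ chosen off the finite union of measure-zero zero sets of the harmonics $\Delta^{\mathsf H}y_\ell(\Omega)$, so the rank-one clustered Gram has $\mu_\ell=0$ while all three objectives diverge with the clustered weight. It then applies a small-cap continuity argument, which you make slightly more explicit via Weyl's inequality.

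One minor point: the tight-frame identity $\sum_e a_\ell a_\ell^{\mathsf H}=\tfrac{K}{4\pi}I$ requires a spherical $2\ell$-design, not merely a degree-$\ell$ design. Since you only need positive definiteness, your spanning alternative already suffices.
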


\begin{IEEEproof}
See Appendix~\ref{app:proof-aggregate-separation}.
\end{IEEEproof}

As a result, the theorem establishes the following structural non-implication. % [B17] 
A high aggregate score is not, by itself, a finite-SNR modal-recoverability certificate.
The same construction can also be applied to a larger truncation by assigning inactive or identical responses to the non-target degrees. % [B18]
As a result, full-space trace and log-det criteria can exhibit the same separation. % [B19]
Aggregate objectives can still be applied to subset ranking and post-admission optimization. % [B20]
However, their values alone do not establish finite-SNR modal feasibility. % [B21]

\section{Recoverability Conditions for Finite Wave-Receiver Layouts}
\label{sec:wave_receiver_certificates}

This section applies the modal-admission framework to finite sampled scalar-wave matrices. % [S1]
For finite scalar-wave layouts, a weighted access library provides an exact certificate, while a separable access structure yields a closed-form design rule. % [S2]

\subsection{Exact Weighted Scalar-Wave Certificate}

For a sampled scalar-wave access \(e\in\mathcal V\), the received sample is given by the following model. % [S3]
\begin{equation}
    y_e
    =
    \sum_{\ell=0}^{L}
    \sum_{m=-\ell}^{\ell}
    a_{\ell}^{\rm wav}(e)Y_\ell^m(\Omega_e)x_{\ell m}
    +n_e,
    \label{eq:weighted_wave_sample}
\end{equation}
Here, \(\Omega_e\) denotes the receiver direction, \(a_\ell^{\rm wav}(e)\) denotes the degree-\(\ell\) transfer coefficient associated with the selected frequency, radius, and gain, and \(n_e\sim\mathcal{CN}(0,\sigma_e^2)\) is the receiver noise. % [S4]
Let \(y_\ell(\Omega_e)\) denote the degree-\(\ell\) angular evaluation vector. % [S5]
\begin{equation}
    y_\ell(\Omega_e)
    =
    \big[
    Y_\ell^{-\ell}(\Omega_e),\ldots,Y_\ell^{\ell}(\Omega_e)
    \big]^{\mathsf T}.
\end{equation}

\begin{proposition}[Exact certificate for weighted finite scalar-wave layouts]
\label{thm:weighted_wave_certificate}
The analysis assumes the scalar-wave observation model in \eqref{eq:weighted_wave_sample}, independent Gaussian receiver noise, and the degree-isotropic covariance \(C_x|_{V_\ell}=P_\ell I_{2\ell+1}\). % [S6]
Let \(w_{\ell,e}\) denote the degree-\(\ell\) access weight. % [S7]
\begin{equation}
    w_{\ell,e}
    =
    \frac{|a_\ell^{\rm wav}(e)|^2}{\sigma_e^2}.
\end{equation}
Under these assumptions, the degree-\(\ell\) finite-SNR recoverability margin is given by the following expression. % [S8]
\begin{equation}
    \mu_\ell(S)
    =
    P_\ell
    \lambda_{\min}
    \left(
    \sum_{e\in S}
    w_{\ell,e}
    y_\ell(\Omega_e)y_\ell(\Omega_e)^{\mathsf H}
    \right).
    \label{eq:weighted_wave_certificate}
\end{equation}
For \(P_\ell>0\), degree-\(\ell\) certification by \(S\) at threshold \(\tau\) is equivalent to the following eigenvalue condition. % [S9]
\begin{equation}
    \lambda_{\min}
    \left(
    \sum_{e\in S}
    w_{\ell,e}
    y_\ell(\Omega_e)y_\ell(\Omega_e)^{\mathsf H}
    \right)
    \ge
    \frac{\tau}{P_\ell}.
\end{equation}
\end{proposition}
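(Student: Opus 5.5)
The argument is a direct computation of the restricted Gram \(\Pi_\ell G(S)\Pi_\ell|_{V_\ell}\) from the model \eqref{eq:weighted_wave_sample}, using the additive form of Section~II.

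\textbf{Step 1: additive form.} With independent noise, each access \(e\) is a scalar block. Its row in \(A_e\) has entries \(a_{\ell'}^{\rm wav}(e)Y_{\ell'}^{m}(\Omega_e)\) for \((\ell',m)\in\mathcal I_L\), and \(R_e=\sigma_e^2\). Hence
\[
    G(S)=\sum_{e\in S}B_e,
    \qquad
    B_e=\sigma_e^{-2}C_x^{1/2}a_e^{\mathsf H}a_eC_x^{1/2},
\]
where \(a_e\) denotes this row. The \(V_\ell\)-coordinates of \(a_e^{\mathsf H}\) form the vector \(\overline{a_\ell^{\rm wav}(e)}\,\overline{y_\ell(\Omega_e)}\).

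\textbf{Step 2: compress to \(V_\ell\).} I read the hypothesis \(C_x|_{V_\ell}=P_\ell I\) together with the degree-isotropic, no-cross-degree model stated before \eqref{eq:mu_nu_relation}. Then \(V_\ell\) reduces \(C_x\), so \(\Pi_\ell C_x^{1/2}=C_x^{1/2}\Pi_\ell=P_\ell^{1/2}\Pi_\ell\). This gives
\[
    \Pi_\ell G(S)\Pi_\ell\big|_{V_\ell}
    =P_\ell\sum_{e\in S}\frac{|a_\ell^{\rm wav}(e)|^2}{\sigma_e^2}\,
    \overline{y_\ell(\Omega_e)}\,\overline{y_\ell(\Omega_e)}^{\mathsf H}.
\]
Equivalently, this is \(P_\ell F_\ell(S)\), so one can invoke \eqref{eq:mu_nu_relation} and compute \(\nu_\ell(S)\) directly.

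\textbf{Step 3: remove the conjugation.} The conjugate-versus-non-conjugate orientation of \(y_\ell\) is a convention. The matrix \(\sum_e w_{\ell,e}\bar y\bar y^{\mathsf H}\) is the entrywise conjugate of \(M=\sum_e w_{\ell,e}y_\ell y_\ell^{\mathsf H}\). Since \(M\) is Hermitian, its conjugate equals \(M^{\mathsf T}\), which has the same real eigenvalues. Taking \(\lambda_{\min}\) in \eqref{eq:mu_degree_certificate} then yields \eqref{eq:weighted_wave_certificate}.

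\textbf{Step 4: the threshold.} Because \(P_\ell>0\), dividing \(\mu_\ell(S)\ge\tau\) by \(P_\ell\) gives the stated equivalence.

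The only delicate point is Step 2. If \(C_x\) had cross-degree blocks, \(C_x^{1/2}\) would mix other degrees into the \(V_\ell\) compression and the formula would fail. So I will state explicitly that the no-cross-degree, degree-isotropic model is used, exactly as in the derivation of \eqref{eq:mu_nu_relation}. If \(C_x\) is singular on other degrees, nothing changes, since only the \(V_\ell\) block enters.
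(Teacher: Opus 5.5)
Your proposal is correct and follows the same route as the paper's proof. Under independent noise you write \(G(S)\) as a sum of per-access rank-one blocks, compress to \(V_\ell\) using \(C_x^{1/2}|_{V_\ell}=\sqrt{P_\ell}\,I\), take \(\lambda_{\min}\), and divide by \(P_\ell>0\). Your two additions are both accurate and make explicit what the paper's proof leaves unstated: the compression step requires \(V_\ell\) to reduce \(C_x\) (no cross-degree covariance), and the compressed Gram is the entrywise conjugate of the displayed matrix, which has the same spectrum because \(\overline{M}=M^{\mathsf T}\) for Hermitian \(M\).
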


\begin{IEEEproof}
See Appendix~\ref{app:proof_weighted_wave_certificate}.
\end{IEEEproof}

Here, \(w_{\ell,e}\) incorporates the degree-dependent transfer gain and receiver-noise penalty, while \(y_\ell(\Omega_e)\) characterizes the angular sampling response. % [S10]

\begin{corollary}[Closed-form law under separable transfer and angular sampling]
\label{cor:separable_wave_law}
Consider a selected access library with a product, grouped, or equal-weight structure, under which the weighted angular Gram is separable. % [S11]
The resulting factorization is given by the following expression. % [S12]
\begin{equation}
    \sum_{e\in S}
    w_{\ell,e}
    y_\ell(\Omega_e)y_\ell(\Omega_e)^{\mathsf H}
    =
    \frac{B_{M,\ell}(S)}{\sigma_M^2}\Gamma_\ell,
\end{equation}
Here, \(B_{M,\ell}(S)\ge0\) denotes the degree-\(\ell\) transfer-energy sum, and \(\Gamma_\ell\) denotes the degree-restricted angular sampling Gram. % [S13]
As a result, the degree-\(\ell\) information margin can be written as follows. % [S14]
\begin{equation}
    \mu_\ell(S)
    =
    \frac{P_\ell}{\sigma_M^2}
    B_{M,\ell}(S)\lambda_{\min}(\Gamma_\ell).
    \label{eq:finite_layout_margin}
\end{equation}
Suppose that \(P_\ell\lambda_{\min}(\Gamma_\ell)>0\). % [S15]
Degree \(\ell\) is certified when the following transfer-energy condition is satisfied. % [S16]
\begin{equation}
    B_{M,\ell}(S)
    \ge
    \frac{\sigma_M^2\tau}{P_\ell\lambda_{\min}(\Gamma_\ell)}.
    \label{eq:exact_access_energy_threshold}
\end{equation}
\end{corollary}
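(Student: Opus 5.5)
The plan is to obtain the corollary directly from Proposition~\ref{thm:weighted_wave_certificate}. The only new ingredient is the separable factorization of the weighted angular Gram, which the statement takes as the defining structural hypothesis. Under the assumptions of Proposition~\ref{thm:weighted_wave_certificate}, namely independent noise, degree-isotropic \(C_x|_{V_\ell}=P_\ell I_{2\ell+1}\), and the scalar-wave model \eqref{eq:weighted_wave_sample}, we already have
\[
\mu_\ell(S)=P_\ell\,\lambda_{\min}\Bigl(\sum_{e\in S}w_{\ell,e}\,y_\ell(\Omega_e)y_\ell(\Omega_e)^{\mathsf H}\Bigr).
\]
I will substitute the assumed factorization \(\sum_{e\in S}w_{\ell,e}y_\ell y_\ell^{\mathsf H}=\sigma_M^{-2}B_{M,\ell}(S)\Gamma_\ell\) into this expression.

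The key step is positive homogeneity of the minimum eigenvalue: \(\lambda_{\min}(cM)=c\,\lambda_{\min}(M)\) for any scalar \(c\ge0\) and Hermitian \(M\). This follows because scaling by \(c\ge 0\) scales every eigenvalue and preserves their order, or equivalently from the Rayleigh-quotient form \(\inf_{\|u\|=1}u^{\mathsf H}(cM)u\). Since \(B_{M,\ell}(S)\ge0\) and \(\sigma_M^2>0\), the scalar \(c=B_{M,\ell}(S)/\sigma_M^2\) is nonnegative, which yields \eqref{eq:finite_layout_margin}.

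I will also note why the three structures named in the statement give the factorization. In the product case \(w_{\ell,e}=b_\ell(e)\) times a common angular factor. In the grouped case each radius or frequency group repeats the same angular pattern. In the equal-weight case all weights coincide. Each case pulls a scalar energy sum \(B_{M,\ell}(S)\) out in front of a fixed Gram \(\Gamma_\ell\succeq0\). This is an interpretation rather than a proof obligation, because the corollary posits the factorization.

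For the certification condition, I will combine \eqref{eq:finite_layout_margin} with the degree-\(\ell\) admission rule \eqref{eq:degree_cert_condition}, \(\mu_\ell(S)\ge\tau\). Under the hypothesis \(P_\ell\lambda_{\min}(\Gamma_\ell)>0\), dividing both sides by the positive quantity \(P_\ell\lambda_{\min}(\Gamma_\ell)/\sigma_M^2\) preserves the inequality. This shows that \(\mu_\ell(S)\ge\tau\) holds if and only if \eqref{eq:exact_access_energy_threshold} holds, which is in fact an equivalence and hence gives the stated sufficiency.

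There is no real obstacle here. The only point needing care is the sign conditions: \(B_{M,\ell}(S)\ge0\) is what licenses the homogeneity step, and strict positivity of \(P_\ell\lambda_{\min}(\Gamma_\ell)\) is what licenses the division. I will also remark on the case \(\lambda_{\min}(\Gamma_\ell)=0\), where the angular layout is rank-deficient on \(V_\ell\). Then \(\mu_\ell(S)=0\) regardless of the transfer energy, consistent with the clustered construction in Theorem~\ref{thm:aggregate-certification-separation}, and the threshold condition is vacuous.
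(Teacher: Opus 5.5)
Your proposal is correct and follows the paper's proof: substitute the separable factorization into Proposition~\ref{thm:weighted_wave_certificate}, pull the nonnegative scalar $B_{M,\ell}(S)/\sigma_M^2$ out of $\lambda_{\min}$, and solve $\mu_\ell(S)\ge\tau$ by dividing by the positive quantity $P_\ell\lambda_{\min}(\Gamma_\ell)/\sigma_M^2$. One wording fix: when $\lambda_{\min}(\Gamma_\ell)=0$, the threshold is undefined rather than vacuous, and certification fails outright because $\mu_\ell(S)=0<\tau$.
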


\begin{IEEEproof}
See Appendix~\ref{app:proof_separable_wave_law}.
\end{IEEEproof}

\subsection{Residual-Interval Certificates}

For a general finite angular layout, we consider the following residual representation. % [S1]
\begin{equation}
    \Gamma_\ell
    =
    \frac{N_{\rm sens}}{4\pi}I_{2\ell+1}+E_\ell,
    \qquad
    \|E_\ell\|_2\le \epsilon_\ell,
    \label{eq:residual_sampling_model}
\end{equation}
Here, \(N_{\rm sens}\) denotes the number of angular samples, and \(\epsilon_\ell\) denotes an upper bound on the spectral norm of the residual. % [S2]

We next define the nominal degree-\(\ell\) information level as follows. % [S3]
\begin{equation}
    \rho_\ell^{(0)}(S)
    =
    \frac{N_{\rm sens}P_\ell}
    {4\pi\sigma_M^2}
    B_{M,\ell}(S).
\end{equation}
Furthermore, we define the corresponding residual radius as follows. % [S4]
\begin{equation}
    \delta_\ell(S)
    =
    \frac{P_\ell}{\sigma_M^2}
    B_{M,\ell}(S)\epsilon_\ell.
\end{equation}
As a result, the residual certificate interval can be written as follows. % [S5]
\begin{equation}
    I_\ell(S)
    =
    \big[
    \rho_\ell^{(0)}(S)-\delta_\ell(S),
    \rho_\ell^{(0)}(S)+\delta_\ell(S)
    \big].
    \label{eq:residual_interval}
\end{equation}

\begin{proposition}[Residual-interval certification labels]
\label{prop:residual_labels}
Under \eqref{eq:residual_sampling_model}, \(I_\ell(S)\) contains every Rayleigh quotient of the degree-\(\ell\) information matrix evaluated at a unit vector. % [S6]
The resulting sufficient certification and rejection labels are given below. % [S7]
\begin{align}
    \rho_\ell^{(0)}(S)-\delta_\ell(S)\ge\tau
    &\Rightarrow
    S\ \text{certifies degree}\ \ell,\\
    \rho_\ell^{(0)}(S)+\delta_\ell(S)<\tau
    &\Rightarrow
    S\ \text{does not certify degree}\ \ell.
\end{align}
If the rejection condition is satisfied for \(S=\mathcal V\) and the access library is monotone, then no selector restricted to subsets of \(\mathcal V\) can certify degree \(\ell\). % [S8]
\end{proposition}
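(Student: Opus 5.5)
The plan is to work entirely through the separable representation of Corollary~\ref{cor:separable_wave_law}, so that the degree-$\ell$ information matrix is
\[
G_\ell(S)=\frac{P_\ell}{\sigma_M^2}B_{M,\ell}(S)\,\Gamma_\ell ,
\]
as supplied by Proposition~\ref{thm:weighted_wave_certificate} together with the factorization in that corollary. Substituting the residual model \eqref{eq:residual_sampling_model} then gives
\[
G_\ell(S)=\rho_\ell^{(0)}(S)\,I_{2\ell+1}+\frac{P_\ell}{\sigma_M^2}B_{M,\ell}(S)\,E_\ell .
\]
For any unit vector $u\in V_\ell$, the Rayleigh quotient is therefore $u^{\mathsf H}G_\ell(S)u=\rho_\ell^{(0)}(S)+\frac{P_\ell}{\sigma_M^2}B_{M,\ell}(S)\,u^{\mathsf H}E_\ell u$.

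I will note that $\Gamma_\ell$ is Hermitian, because it is a weighted sum of outer products, and that the identity part is Hermitian. Hence $E_\ell$ is Hermitian and $u^{\mathsf H}E_\ell u$ is real. Its modulus is at most $\|E_\ell\|_2\le\epsilon_\ell$ by Cauchy--Schwarz. Since $B_{M,\ell}(S)\ge0$, $P_\ell>0$ and $\sigma_M^2>0$, the perturbation term lies in $[-\delta_\ell(S),\delta_\ell(S)]$. This proves the first claim: every Rayleigh quotient lies in $I_\ell(S)$.

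Next I will take the infimum over unit $u$ and use the variational characterization \eqref{eq:mu_degree_certificate}. This gives $\rho_\ell^{(0)}-\delta_\ell\le\mu_\ell(S)\le\rho_\ell^{(0)}+\delta_\ell$. The upper bound uses that the infimum is at most any single Rayleigh quotient, which is itself at most $\rho_\ell^{(0)}+\delta_\ell$. If $\rho_\ell^{(0)}-\delta_\ell\ge\tau$, then $\mu_\ell(S)\ge\tau$, which is exactly condition \eqref{eq:degree_cert_condition}. If $\rho_\ell^{(0)}+\delta_\ell<\tau$, then $\mu_\ell(S)<\tau$, so $S$ does not certify degree $\ell$.

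For the final statement I will apply the rejection label at $S=\mathcal V$ to get $\mu_\ell(\mathcal V)<\tau$. Under $G$-monotonicity, Corollary~\ref{cor:full_library_infeasibility} gives $\mu_\ell(S)\le\mu_\ell(\mathcal V)<\tau$ for every $S\subseteq\mathcal V$, so no selector can certify degree $\ell$. If only $F$-monotonicity is assumed, I will use the $\nu_\ell$ version of that corollary together with $\mu_\ell=P_\ell\nu_\ell$ from \eqref{eq:mu_nu_relation}, which is valid under the degree-isotropic assumption.

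The main obstacle I anticipate is bookkeeping rather than analysis. The residual model and the constants $N_{\rm sens}$, $\epsilon_\ell$ are stated for the layout $S$ through $\Gamma_\ell$, while the monotonicity step compares different sets. I will therefore state explicitly that the rejection test is applied to $\Gamma_\ell$, $B_{M,\ell}$ and $\epsilon_\ell$ evaluated for $\mathcal V$ itself. Monotonicity is then used only through the certified inequality $\mu_\ell(S)\le\mu_\ell(\mathcal V)$, so no uniformity of the residual bound across subsets is needed.
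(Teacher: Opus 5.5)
Your proposal is correct and follows essentially the same route as the paper: substitute \eqref{eq:residual_sampling_model} into the separable Gram so that each unit Rayleigh quotient equals $\rho_\ell^{(0)}(S)$ plus a term bounded by $\delta_\ell(S)$, then read off the two labels and invoke Corollary~\ref{cor:full_library_infeasibility} at $S=\mathcal V$. Your added details (Hermiticity of $E_\ell$, the two-sided bound on $\mu_\ell(S)$, and evaluating the residual model for $\mathcal V$ itself) fill in steps the paper leaves implicit. Your separate $F$-monotone branch is unnecessary, since $F(S)\preceq F(S')$ already implies $G(S)\preceq G(S')$ by congruence with $C_x^{1/2}$.
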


\begin{IEEEproof}
See Appendix~\ref{app:proof_residual_labels}.
\end{IEEEproof}

When neither implication applies, the residual interval alone does not enable a certification decision. % [S9]
To resolve this ambiguity, an exact information-floor evaluation or a tighter spectral-residual bound is required. % [S10]

\subsection{Angular Sampling Efficiency and the Tight-Frame Bound}

Let \(y_\ell(\Omega_i)\) denote the degree-\(\ell\) evaluation vector associated with the receiver direction \(\Omega_i\in\mathbb S^2\). % [S1]
\begin{equation}
    y_\ell(\Omega_i)
    =
    \big[
    Y_\ell^{-\ell}(\Omega_i),
    \ldots,
    Y_\ell^\ell(\Omega_i)
    \big]^{\mathsf T}
    \in\mathbb C^{2\ell+1}.
\end{equation}
We next define the degree-\(\ell\) angular Gram of the layout \(\Omega_{1:N_{\rm sens}}\) as follows. % [S2]
\begin{equation}
    \Gamma_\ell(\Omega_{1:N_{\rm sens}})
    =
    \sum_{i=1}^{N_{\rm sens}}
    y_\ell(\Omega_i)y_\ell(\Omega_i)^{\mathsf H}.
\end{equation}
The angular efficiency of this layout is denoted by \(\alpha_\ell\) and defined as follows. % [S3]
\begin{equation}
    \alpha_\ell
    =
    \frac{4\pi}{N_{\rm sens}}
    \lambda_{\min}(\Gamma_\ell).
    \label{eq:angular_efficiency}
\end{equation}

\begin{lemma}[Angular-efficiency bound and tight-frame equality]
\label{thm:angular_efficiency}
For any angular layout, the admissible range of \(\alpha_\ell\) is given by the following expression. % [S4]
\begin{equation}
    0\le \alpha_\ell\le 1.
\end{equation}
Under Corollary~\ref{cor:separable_wave_law}, the degree-\(\ell\) recoverability margin can be written as follows. % [S5]
\begin{equation}
    \mu_\ell(S)
    =
    \frac{P_\ell}{\sigma_M^2}
    B_{M,\ell}(S)
    \frac{N_{\rm sens}}{4\pi}
    \alpha_\ell.
    \label{eq:angular_efficiency_margin}
\end{equation}
The equality \(\alpha_\ell=1\) is equivalent to the following degree-\(\ell\) tight-frame relation. % [S6]
\begin{equation}
    \Gamma_\ell
    =
    \frac{N_{\rm sens}}{4\pi}I_{2\ell+1}.
\end{equation}
Consequently, for \(\alpha_\ell>0\), the exact transfer-energy requirement can be written as follows. % [S7]
\begin{equation}
    B_{M,\ell}(S)
    \ge
    \frac{4\pi\sigma_M^2\tau}
    {P_\ell N_{\rm sens}\alpha_\ell}.
    \label{eq:angular_efficiency_threshold}
\end{equation}
\end{lemma}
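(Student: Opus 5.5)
The plan is to handle the lower bound, the upper bound, and the two consequences separately. Everything rests on one identity: the trace of the degree-\(\ell\) angular Gram.

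\emph{Lower bound.} Since \(\Gamma_\ell=\sum_i y_\ell(\Omega_i)y_\ell(\Omega_i)^{\mathsf H}\) is a sum of rank-one positive semidefinite matrices, \(\Gamma_\ell\succeq0\). Hence \(\lambda_{\min}(\Gamma_\ell)\ge0\) and \(\alpha_\ell\ge0\) by \eqref{eq:angular_efficiency}.

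\emph{Upper bound.} We compute the trace exactly. The unweighted Uns\"old addition theorem for orthonormal spherical harmonics gives
\[
\|y_\ell(\Omega)\|_2^2=\sum_{m=-\ell}^{\ell}|Y_\ell^m(\Omega)|^2=\frac{2\ell+1}{4\pi}
\]
for every \(\Omega\in\mathbb S^2\). Summing over the layout yields
\[
\operatorname{tr}\Gamma_\ell=\frac{N_{\rm sens}(2\ell+1)}{4\pi}.
\]
The minimum eigenvalue of a \((2\ell+1)\times(2\ell+1)\) Hermitian matrix is at most its average eigenvalue, so
\[
\lambda_{\min}(\Gamma_\ell)\le\frac{\operatorname{tr}\Gamma_\ell}{2\ell+1}=\frac{N_{\rm sens}}{4\pi},
\]
which is \(\alpha_\ell\le1\).

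\emph{Equality case.} Equality \(\lambda_{\min}=\operatorname{tr}/(2\ell+1)\) holds if and only if all eigenvalues coincide, i.e.\ \(\Gamma_\ell=\frac{N_{\rm sens}}{4\pi}I_{2\ell+1}\). This is the tight-frame relation. Conversely, if the tight-frame relation holds, then \(\alpha_\ell=1\) directly from the definition.

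\emph{Consequences.} The margin identity \eqref{eq:angular_efficiency_margin} is obtained by substituting \(\lambda_{\min}(\Gamma_\ell)=\frac{N_{\rm sens}}{4\pi}\alpha_\ell\) into \eqref{eq:finite_layout_margin} from Corollary~\ref{cor:separable_wave_law}. For \(\alpha_\ell>0\) and \(P_\ell>0\), the condition \(\mu_\ell(S)\ge\tau\) rearranges to \eqref{eq:angular_efficiency_threshold}. This is exactly \eqref{eq:exact_access_energy_threshold} rewritten with the same substitution.

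\emph{Main obstacle.} The only delicate point is the trace identity. It requires the \(Y_\ell^m\) to be the orthonormal (\(L^2(\mathbb S^2)\)-normalized) complex harmonics, which the \(4\pi\) normalization in \eqref{eq:residual_sampling_model} implicitly assumes. If a real harmonic basis is used instead, the basis is related by a unitary change of coordinates within \(V_\ell\). I would remark that this preserves both the norm \(\|y_\ell(\Omega)\|_2\) and the spectrum of \(\Gamma_\ell\), so the argument is basis-independent.
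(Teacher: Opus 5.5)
Your proposal is correct and takes essentially the same route as the paper. The addition-theorem identity \(\|y_\ell(\Omega)\|_2^2=(2\ell+1)/(4\pi)\) fixes \(\operatorname{tr}\Gamma_\ell\), and the minimum eigenvalue is at most the average eigenvalue, with equality exactly when \(\Gamma_\ell\) is a multiple of the identity (the tight-frame case). The margin and threshold then follow by substituting \(\lambda_{\min}(\Gamma_\ell)=\frac{N_{\rm sens}}{4\pi}\alpha_\ell\) into Corollary~\ref{cor:separable_wave_law}. Your positive-semidefiniteness argument for \(\alpha_\ell\ge0\) and your basis-invariance remark are small additions that the paper leaves implicit.
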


\begin{IEEEproof}
See Appendix~\ref{app:proof_angular_efficiency}.
\end{IEEEproof}

The scalar \(\alpha_\ell\) indicates the receiver-distribution efficiency at modal degree \(\ell\). % [S8]
The certified margins of layouts with the same sensor count and transfer energy can be different from each other. % [S9]
Tight-frame layouts are therefore directly associated with the equality case of finite-SNR modal certification~\cite{benedetto2003tight}, while spherical-harmonic sampling provides well-conditioned reference designs~\cite{driscoll1994sphere,mcewen2011sphere,khalid2014optimal}. % [S10]

\begin{theorem}[Scalar-wave cardinality and transfer-energy converse]
\label{thm:access-energy-converse}
Consider the scalar-wave model in Proposition~\ref{thm:weighted_wave_certificate} and a modal degree \(\ell\) with dimension \(d_\ell=2\ell+1\). % [S11]
If \(S\) certifies degree \(\ell\) at threshold \(\tau>0\), then the following two necessary conditions hold. % [S12]
\[
|S|\ge d_\ell
\]
and
\[
\sum_{e\in S} w_{\ell,e}\ge \frac{4\pi\tau}{P_\ell}.
\]
If \(w_{\ell,e}\le w_{\max,\ell}\) for every feasible access, then each certified subset satisfies the following cardinality bound. % [S13]
\[
|S|\ge \max\left\{d_\ell,\left\lceil\frac{4\pi\tau}{P_\ell w_{\max,\ell}}\right\rceil\right\}.
\]
For simultaneous target-subspace admission, a positive value of either \(\mu_T(S)\) or \(\mu_{T|c}(S)\) further requires \(|S|\ge d_T\), where \(d_T=\sum_{\ell\in T}(2\ell+1)\). % [S14]
\end{theorem}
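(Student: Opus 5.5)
The plan starts from Proposition~\ref{thm:weighted_wave_certificate}. Write the degree-$\ell$ weighted angular Gram as
\[
M_\ell(S)=\sum_{e\in S} w_{\ell,e}\,y_\ell(\Omega_e)y_\ell(\Omega_e)^{\mathsf H}\in\mathbb C^{d_\ell\times d_\ell}.
\]
Certification at $\tau>0$ then means $P_\ell\lambda_{\min}(M_\ell(S))\ge\tau>0$, so $M_\ell(S)\succ0$.

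For the cardinality bound, $M_\ell(S)$ is a sum of $|S|$ rank-one positive semidefinite terms. Hence $\operatorname{rank}M_\ell(S)\le|S|$. Positive definiteness of a $d_\ell\times d_\ell$ matrix forces rank $d_\ell$, which gives $|S|\ge d_\ell$.

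For the transfer-energy bound, I will use the trace. The minimum eigenvalue is at most the average eigenvalue, so
\[
\frac{\tau}{P_\ell}\le\lambda_{\min}(M_\ell(S))\le\frac{1}{d_\ell}\operatorname{tr}M_\ell(S)=\frac{1}{d_\ell}\sum_{e\in S}w_{\ell,e}\|y_\ell(\Omega_e)\|_2^2 .
\]
The key ingredient is the spherical-harmonic addition theorem (Uns\"old's identity):
\[
\|y_\ell(\Omega)\|_2^2=\sum_{m=-\ell}^{\ell}|Y_\ell^m(\Omega)|^2=\frac{2\ell+1}{4\pi}\quad\text{for every }\Omega\in\mathbb S^2 .
\]
Substituting it, the factor $d_\ell$ cancels and the trace bound becomes $\tau/P_\ell\le\frac{1}{4\pi}\sum_{e\in S}w_{\ell,e}$. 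This is exactly $\sum_{e\in S}w_{\ell,e}\ge 4\pi\tau/P_\ell$.

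The bounded-weight corollary follows directly. Since $\sum_{e\in S}w_{\ell,e}\le|S|\,w_{\max,\ell}$, we get $|S|\ge 4\pi\tau/(P_\ell w_{\max,\ell})$. Because $|S|$ is an integer, the ceiling applies, and combining with $|S|\ge d_\ell$ gives the stated maximum. This trace step is where I expect the only real care to be needed: the normalization convention of $Y_\ell^m$ (orthonormal on $\mathbb S^2$) must match the $4\pi$ in \eqref{eq:angular_efficiency}. The trace identity is also what underlies $\alpha_\ell\le1$ in Lemma~\ref{thm:angular_efficiency}, so I will cite that convention consistently.

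For the target-subspace claim, I will use the rank structure of $G(S)=C_x^{1/2}F(S)C_x^{1/2}$. In the scalar-wave model each access contributes a single scalar sample, so $A_S$ has $|S|$ rows and $\operatorname{rank}G(S)\le\operatorname{rank}A_S\le|S|$. The compression $\Pi_TG(S)\Pi_T|_{V_T}$ then has rank at most $|S|$ on the $d_T$-dimensional space $V_T$. If $\mu_T(S)>0$ this compression is positive definite, which forces $|S|\ge d_T$.

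For the Schur floor, Theorem~\ref{thm:multitier-admission} gives $0\le\mu_{T|c}(S)\le\mu_T(S)$. So $\mu_{T|c}(S)>0$ implies $\mu_T(S)>0$, and the same bound follows. If multi-sample accesses are allowed, the argument should instead read $m_S\ge d_T$; I will note that the scalar-wave model has $m_S=|S|$.
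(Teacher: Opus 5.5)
Your proposal is correct and follows the paper's proof essentially step for step: a rank count on the sum of $|S|$ rank-one terms gives $|S|\ge d_\ell$, the bound $\lambda_{\min}(M_\ell)\le d_\ell^{-1}\operatorname{tr}M_\ell$ with Uns\"old's identity $\|y_\ell(\Omega)\|_2^2=d_\ell/(4\pi)$ gives the transfer-energy condition, and the rank-at-most-$|S|$ argument handles $\mu_T$. The only small deviation is the Schur case: you reduce $\mu_{T|c}>0$ to $\mu_T>0$ through the ordering in Theorem~\ref{thm:multitier-admission}, whereas the paper directly bounds the rank of $I_{T|c}=H_T^{\mathsf H}(I+H_cH_c^{\mathsf H})^{-1}H_T$ by $|S|$; both are valid.
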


\begin{IEEEproof}
See Appendix~\ref{app:proof_access_energy_converse}.
\end{IEEEproof}

\begin{proposition}[High-probability guarantee for isotropic random layouts]
\label{thm:random_layout}
Consider \(N_{\rm sens}\) receiver directions sampled independently and uniformly on \(\mathbb S^2\). % [S15]
For \(0<\varepsilon<1\), the lower-tail probability is bounded as follows. % [S16]
\begin{equation}
    \Pr\!\left\{
    \lambda_{\min}(\Gamma_\ell)
    \le
    (1-\varepsilon)\frac{N_{\rm sens}}{4\pi}
    \right\}
    \le
    (2\ell+1)
    \exp\!\left(
    -\frac{N_{\rm sens}\varepsilon^2}
    {2(2\ell+1)}
    \right).
    \label{eq:random_layout_chernoff}
\end{equation}
Consequently, on the complementary event, the degree-\(\ell\) information margin satisfies the following lower bound. % [S17]
\begin{equation}
    \mu_\ell(S)
    \ge
    \frac{P_\ell}{\sigma_M^2}
    B_{M,\ell}(S)
    \frac{N_{\rm sens}}{4\pi}
    (1-\varepsilon).
\end{equation}
\end{proposition}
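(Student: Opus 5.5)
The plan is to apply the matrix Chernoff lower-tail inequality for sums of independent random positive semidefinite matrices (Tropp~\cite{tropp2012tail}) to the summands of the angular Gram. Write
\[
\Gamma_\ell=\sum_{i=1}^{N_{\rm sens}} X_i,
\qquad
X_i=y_\ell(\Omega_i)y_\ell(\Omega_i)^{\mathsf H}\succeq0 .
\]
Because the directions are i.i.d.\ uniform, the \(X_i\) are i.i.d.\ Hermitian positive semidefinite matrices of dimension \(d=2\ell+1\). Two ingredients are needed: the mean of \(\Gamma_\ell\), which gives \(\mu_{\min}\), and a uniform almost-sure bound \(R\) on \(\lambda_{\max}(X_i)\).

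For the mean, orthonormality of the spherical harmonics on \(\mathbb S^2\) with respect to the surface measure gives
\[
\mathbb E[X_i]=\frac{1}{4\pi}\int_{\mathbb S^2} y_\ell y_\ell^{\mathsf H}\,d\Omega=\frac{1}{4\pi}I_{2\ell+1}.
\]
Hence \(\mathbb E[\Gamma_\ell]=\frac{N_{\rm sens}}{4\pi}I\), so \(\mu_{\min}=N_{\rm sens}/(4\pi)\).

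For the bound, \(X_i\) has rank one, so \(\lambda_{\max}(X_i)=\|y_\ell(\Omega_i)\|_2^2\). The spherical-harmonic addition theorem gives \(\sum_m |Y_\ell^m(\Omega)|^2=(2\ell+1)/(4\pi)\) for every \(\Omega\). This is deterministic, so \(R=(2\ell+1)/(4\pi)\).

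Tropp's lower tail states
\[
\Pr\{\lambda_{\min}\le(1-\varepsilon)\mu_{\min}\}\le d\left[\frac{e^{-\varepsilon}}{(1-\varepsilon)^{1-\varepsilon}}\right]^{\mu_{\min}/R}.
\]
Using the standard simplification \(e^{-\varepsilon}(1-\varepsilon)^{-(1-\varepsilon)}\le e^{-\varepsilon^2/2}\) for \(\varepsilon\in(0,1)\), together with \(\mu_{\min}/R=N_{\rm sens}/(2\ell+1)\), yields exactly \eqref{eq:random_layout_chernoff}.

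The main point to check is the normalization constant. The bound \(R\) and the mean \(\mu_{\min}\) must be computed with the same convention, namely orthonormal \(Y_\ell^m\) and uniform probability measure \(d\Omega/(4\pi)\), so that \(4\pi\) cancels in \(\mu_{\min}/R\). This is consistent with the \(N_{\rm sens}/(4\pi)\) scaling used in \eqref{eq:residual_sampling_model} and Lemma~\ref{thm:angular_efficiency}. The simplified exponent \(\varepsilon^2/2\) follows from \((1-\varepsilon)\log(1-\varepsilon)\ge-\varepsilon+\varepsilon^2/2\), which can be checked by series expansion.

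For the margin bound, on the complementary event we have \(\lambda_{\min}(\Gamma_\ell)>(1-\varepsilon)N_{\rm sens}/(4\pi)\). Substituting this into \eqref{eq:finite_layout_margin} of Corollary~\ref{cor:separable_wave_law} (equivalently \eqref{eq:angular_efficiency_margin} with \(\alpha_\ell\ge1-\varepsilon\)) gives the stated lower bound on \(\mu_\ell(S)\). This step uses \(P_\ell B_{M,\ell}(S)/\sigma_M^2\ge0\), so multiplying the eigenvalue inequality by this factor preserves its direction.
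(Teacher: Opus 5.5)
Your proposal is correct and follows the paper's proof. Both compute $\mathbb E[X_i]=(4\pi)^{-1}I$ from orthonormality and $\lambda_{\max}(X_i)=(2\ell+1)/(4\pi)$ from the addition theorem, apply Tropp's lower-tail matrix Chernoff bound with $\mu_{\min}/R=N_{\rm sens}/(2\ell+1)$, and substitute into Corollary~\ref{cor:separable_wave_law} to get the margin; you additionally make explicit the simplification $e^{-\varepsilon}(1-\varepsilon)^{-(1-\varepsilon)}\le e^{-\varepsilon^2/2}$ and the nonnegativity of the prefactor, which the paper leaves implicit.
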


\begin{IEEEproof}
See Appendix~\ref{app:proof_random_layout}.
\end{IEEEproof}

Proposition~\ref{thm:random_layout} provides a matrix-Chernoff benchmark showing that isotropic random layouts approach the tight-frame scaling law as the number of angular samples increases~\cite{tropp2012tail}. % [S18]

\section{Recoverability Guarantees and Access Limits}
\label{sec:operational}

In this section, we first describe the finite-SNR guarantees induced by degree-wise admission. % [S01]
Next, we present simultaneous target-layer operation under target-subspace admission. % [S02]
At last, we describe target-layer operation under Schur admission when the non-target modes remain active as nuisance variables. % [S03]

\begin{proposition}[Operational implications of modal certificates]
\label{thm:operational_metrics}
Suppose that source-whitened Gaussian signaling is adopted. % [S1]
Let \(G_{\ell\ell}(S)\) denote the degree-\(\ell\) block of the prior-weighted Gram associated with the selected access set \(S\). % [S2]
\begin{equation}
G_{\ell\ell}(S)
\triangleq
\left.\Pi_\ell G(S)\Pi_\ell\right|_{V_\ell}.
\end{equation}
The argument \(S\) is suppressed in the remainder of this proposition for notation simplification. % [S3]
The three effective information triples are given by % [S4]
\begin{equation}
\label{eq:effective_information_triples}
(J,d,\mu)=
\begin{cases}
(G_{\ell\ell},d_\ell,\mu_\ell), & \text{degree-wise},\\
(G_{TT},d_T,\mu_T), & \text{target subspace},\\
(I_{T|c},d_T,\mu_{T|c}), & \text{Schur-conditioned}.
\end{cases}
\end{equation}
If \(\mu\ge\tau\), i.e., the minimum eigenvalue of \(J\) is no smaller than \(\tau\), then the following log-det lower bound holds. % [S5]
\begin{equation}
\log_2\det(I_d+J)
\ge
d\log_2(1+\tau),
\label{eq:operational_logdet}
\end{equation}
The same condition ensures that the effective posterior covariance does not exceed \((1+\tau)^{-1}I_d\). % [S6]
\begin{equation}
(I_d+J)^{-1}
\preceq
(1+\tau)^{-1}I_d.
\label{eq:operational_covariance}
\end{equation}
In particular, \((I_d+J)^{-1}\) indicates the degree-restricted posterior covariance, the target-only posterior covariance, and the nuisance-marginalized target posterior covariance for the three cases, respectively. % [S7]
Inequality~\eqref{eq:operational_covariance} ensures that the largest source-whitened posterior variance does not exceed \((1+\tau)^{-1}\). % [S8]
The same inequality ensures that the worst-direction EVM amplitude does not exceed \((1+\tau)^{-1/2}\). % [S9]

Given two codewords whose source-whitened difference \(\Delta s\) belongs to \(V_\ell\), their squared distance and Gaussian binary ML error can be bounded as follows. % [S10]
\begin{equation}
d^2(S)
\ge
\tau\|\Delta s\|_2^2,
\qquad
P(i\rightarrow j)
\le
\frac{1}{2}
\exp\left(
-\frac{\tau\|\Delta s\|_2^2}{4}
\right).
\label{eq:operational_pairwise}
\end{equation}
\end{proposition}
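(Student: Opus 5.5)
The plan is to reduce every claim to spectral facts about the Hermitian positive semidefinite matrix \(J\) in \eqref{eq:effective_information_triples}, and then add the posterior and pairwise-error interpretations. First I would check that each \(J\) is Hermitian with \(J\succeq0\). For \(G_{\ell\ell}\) and \(G_{TT}\) this holds because they are compressions of \(G(S)\succeq0\). For \(I_{T|c}\) it follows from Theorem~\ref{thm:multitier-admission}, which gives \(\mu_{T|c}\ge0\). In every case \(\mu=\lambda_{\min}(J)\) by definition. So the hypothesis \(\mu\ge\tau\) is equivalent to \(J\succeq\tau I_d\), and hence to \(I_d+J\succeq(1+\tau)I_d\succ0\).

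\textbf{Log-det and covariance bounds.} Write the eigenvalues of \(J\) as \(\lambda_i\ge\tau\). Then
\[
\log_2\det(I_d+J)=\sum_{i=1}^{d}\log_2(1+\lambda_i)\ge d\log_2(1+\tau),
\]
which is \eqref{eq:operational_logdet}. Matrix inversion is operator-monotone decreasing on positive definite matrices, or equivalently one can invert the eigenvalues in a common eigenbasis. This turns \(I_d+J\succeq(1+\tau)I_d\) into \eqref{eq:operational_covariance}. As a consequence \(\lambda_{\max}\big((I_d+J)^{-1}\big)\le(1+\tau)^{-1}\), so every source-whitened posterior variance \(u^{\mathsf H}(I_d+J)^{-1}u\) with \(\|u\|_2=1\) is bounded by \((1+\tau)^{-1}\). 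Taking square roots gives the worst-direction EVM amplitude bound \((1+\tau)^{-1/2}\).

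\textbf{Posterior interpretation.} For the whitened model \(\bar y=Hs+\bar n\) with prior \(s\sim\mathcal{CN}(0,I)\), the standard Gaussian posterior formula gives covariance \((I+H^{\mathsf H}H)^{-1}\). The three cases are handled as follows.
\begin{itemize}
\item Target-only case: the non-target coefficients are set aside, so only target columns enter and the information matrix is \(G_{TT}\).
\item Degree-restricted case: the same reasoning with degree-\(\ell\) columns gives \(G_{\ell\ell}\).
\item Nuisance-marginalized case: I would quote Theorem~\ref{thm:multitier-admission} directly, since it already gives \(\Sigma_{T|y}=(I+I_{T|c})^{-1}\).
\end{itemize}
A coordinate subtlety arises here. \(G\) is defined in full coordinates through \(C_x^{1/2}\), while \(K\) lives in active coordinates. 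I would handle this by noting that \(G\) and \(K\) share their nonzero spectrum and restricting to the source-whitened coordinates used in the theorem.

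\textbf{Pairwise bound.} The noiseless whitened outputs of the two codewords differ by \(H\Delta s\). So
\[
d^2(S)=\|H\Delta s\|_2^2=\Delta s^{\mathsf H}G\,\Delta s .
\]
Because \(\Delta s\in V_\ell\), this equals \(\Delta s^{\mathsf H}G_{\ell\ell}\Delta s\), which is at least \(\mu_\ell\|\Delta s\|_2^2\ge\tau\|\Delta s\|_2^2\) by the Rayleigh-quotient form \eqref{eq:mu_degree_certificate}.

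For binary ML detection in unit-variance complex white noise, the error probability is \(Q\big(\sqrt{d^2/2}\big)\). The Chernoff bound \(Q(x)\le\tfrac12 e^{-x^2/2}\) then gives \(\tfrac12\exp(-d^2/4)\), and monotonicity in \(d^2\) yields \eqref{eq:operational_pairwise}.

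The main obstacle I expect is fixing the noise normalization in this last step. The exponent \(d^2/4\) depends on \(\mathcal{CN}(0,I)\) having per-real-dimension variance \(1/2\). I would verify this by projecting the noise onto the direction \(H\Delta s/\|H\Delta s\|_2\), which gives a real Gaussian with variance \(1/2\). The ML threshold then sits at distance \(d/2\), so the error probability is \(Q\big((d/2)/\sqrt{1/2}\big)=Q(d/\sqrt2)\), consistent with the formula above.
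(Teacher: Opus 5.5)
Your proposal is correct and follows essentially the same route as the paper's proof. The eigenvalue floor $J\succeq\tau I_d$ gives the log-det, posterior-covariance and EVM bounds; the Rayleigh quotient on $V_\ell$ gives $d^2(S)\ge\tau\|\Delta s\|_2^2$; substitution into $\tfrac12\exp(-d^2/4)$ gives the ML bound; and Theorem~\ref{thm:multitier-admission} supplies the Schur-case posterior. You only add detail the paper takes as given, such as deriving the pairwise bound from $Q(d/\sqrt2)$ and Chernoff. Your coordinate worry is settled more simply by using the full-coordinate whitened variable $x=C_x^{1/2}s$ with $s\sim\mathcal{CN}(0,I_Q)$, whose posterior precision is exactly $I+G(S)$; the shared nonzero spectrum of $G$ and $K$ does not control the compressions to $V_\ell$ or $V_T$.
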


\begin{IEEEproof}
See Appendix~\ref{app:proof_operational_metrics}.
\end{IEEEproof}

The threshold \(\tau\) can be selected from operational requirements.
For an admission tier of dimension \(d\), a log-det target \(R\) requires
\(\tau\ge 2^{R/d}-1\), and an EVM target
\(\operatorname{EVM}_{\max}\) requires
\(\tau\ge \operatorname{EVM}_{\max}^{-2}-1\).
If all relevant source-whitened codeword differences satisfy
\(\|\Delta s\|_2^2\ge d_{\min}^2\), then
\(P(i\to j)\le P_e^\star\) is guaranteed by
\[
    \tau
    \ge
    \frac{4}{d_{\min}^2}
    \ln\!\left(\frac{1}{2P_e^\star}\right).
\]
If multiple requirements are imposed, the largest resulting lower bound is used.

Given an access library \(\mathcal V\), a budget \(K\), a target set \(T\),
and an operational threshold \(\tau\), the three full-library certificate
values can be evaluated by the receiver as follows. % [S1]
\[
\min_{\ell\in T}\mu_\ell(\mathcal V),\qquad
\mu_T(\mathcal V),\qquad
\mu_{T|c}(\mathcal V).
\]
If the required full-library value is below \(\tau\), i.e., the full
library does not satisfy the required admission tier, then no subset
selector over the same monotone library can satisfy that tier. % [S2]
Otherwise, the access-selection process is iterated until the relevant
certificate is satisfied or the budget \(K\) is exhausted. % [S3]
Degree-wise admission is used to support single-degree operation. % [S4]
Target-subspace admission is used to support simultaneous target-layer
activation. % [S5]
Schur admission is used to support target operation when non-target
modes remain active as nuisance variables. % [S6]
After admission, downstream resource-allocation criteria, such as
log-det, Fisher information, codebook distance, or geometric coverage,
can be applied to allocate the remaining access budget. % [S7]

\section{Receiver-Selection Rules and Baselines}
\label{sec:selection_protocol}

The construction of all receiver-selection paths is carried out in
source- and noise-whitened coordinates. % [S01]
For illustration, we consider the white-prior
branch, where the physical and prior-weighted recoverability
Grams coincide. % [S02]
Under independent scalar accesses, the common recoverability
Gram can be given by % [S03]
\begin{equation}
G(S)=\sum_{e\in S}g_e g_e^{\rm H}.
\label{eq:num_additive_gram}
\end{equation}
The three secondary information criteria are given by % [S05]
\begin{align}
J_{\rm deg}(S)
&=
\sum_{\ell\in T}
\log_2\det\!\left(
I+\Pi_\ell G(S)\Pi_\ell\big|_{V_\ell}
\right),
\label{eq:num_degree_secondary}\\
J_T(S)
&=
\log_2\det\!\left(I+G_{TT}(S)\right),
\label{eq:num_target_secondary}\\
J_{T|c}(S)
&=
\log_2\det\!\left(I+I_{T|c}(S)\right).
\label{eq:num_schur_secondary}
\end{align}

For each \(a\in\{{\rm deg},T,T|c\}\), let \(e_a^\star\)
denote the access selected by the following strict
lexicographic rule. % [S06]
\begin{equation}
e_a^\star
=
\mathop{\arg\max\nolimits^{\rm lex}}_{e\in\mathcal V\setminus S}
\left(
\mu_a(S\cup\{e\}),
J_a(S\cup\{e\})
\right).
\label{eq:num_lexicographic_selector}
\end{equation}
In~\eqref{eq:num_lexicographic_selector},
\(\mu_a\) is the primary coordinate and \(J_a\) breaks
certificate ties within numerical tolerance; any remaining tie
is resolved by the smallest candidate index.

Once the requested certificate reaches its prescribed
threshold, the remaining budget is allocated according to the
global log-det marginal. % [S13]
By applying the matrix determinant lemma to the rank-one
update \(G(S\cup\{e\})=G(S)+g_e g_e^{\rm H}\), we can obtain
the following marginal expression. % [S14]
\begin{equation}
\begin{aligned}
\Delta_{\rm log}(e\mid S)
&=
\log_2\det\!\left(I+G(S\cup\{e\})\right)
-\log_2\det\!\left(I+G(S)\right)\\
&=
\log_2\!\left(
1+
g_e^{\rm H}
\left(I+G(S)\right)^{-1}
g_e
\right),\\
e_{\rm log}^\star
&=
\arg\max_{e\in\mathcal V\setminus S}
\Delta_{\rm log}(e\mid S).
\end{aligned}
\label{eq:num_post_admission_logdet}
\end{equation}
For the global log-det baseline,
\eqref{eq:num_post_admission_logdet} is applied from the
empty set. % [S15]

Random-prefix references use fixed-seed permutations of the same access library and are evaluated at each integer budget.

Consider the structural aggregate-objective separation
diagnostic, where \(\ell_0\) denotes the tested modal degree
and \(d_{\ell_0}=2\ell_0+1\) denotes its dimension. % [S18]
For our numerical study, we set \(\ell_0=3\). % [S19]
The finite codebook is given by the following deterministic
\(d_{\ell_0}\)-dimensional DFT basis. % [S20]
\begin{equation}
c_p[q]
=
\frac{1}{\sqrt{d_{\ell_0}}}
\exp\!\left(
\frac{{\rm j}2\pi pq}{d_{\ell_0}}
\right),
\qquad
p,q=0,\ldots,d_{\ell_0}-1,
\label{eq:num_dft_codebook}
\end{equation}
The DFT basis is embedded in \(V_{\ell_0}\), and the
corresponding difference set can be given by % [S21]
\begin{equation}
\mathcal D_{\ell_0}
=
\left\{
c_p-c_{p'}:
0\leq p<p'<d_{\ell_0}
\right\}.
\label{eq:num_codeword_difference_set}
\end{equation}
The performance quantities considered in this diagnostic are
the trace, degree-restricted log-det, and finite-codebook
distance. % [S22]
However, these quantities are not used to construct additional
selector trajectories. % [S23]
The performance metrics considered for every deterministic
or randomly generated subset are
\(\mu_{\rm deg}(S)\), \(\mu_T(S)\), and
\(\mu_{T|c}(S)\). % [S24]
These three certificates are evaluated regardless of the
objective used to construct the subset. % [S25]

\section{Numerical Evaluation and Discussion}
\label{sec:numerical_results}

The numerical experiments for the proposed admission framework are
carried out in this section. % [S01]
The three claims evaluated in the experiments are summarized as follows:
the admission tiers are distinct, aggregate information objectives cannot
replace a worst-direction certificate, and the relative efficiency of greedy
receiver-selection paths varies with the prescribed finite-SNR floor. % [S02]
For illustration convenience, we consider controlled dimensionless
experiments. % [S03]
Calibration to a particular hardware platform is not considered in this
numerical study. % [S04]

\subsection{Simulation Configuration}
\label{subsec:num_setup}

For our numerical experiments, we set the normalized operating point to
\(kr=5\) and the retained order to
\(L=\lceil kr\rceil+1=6\), including one guard degree beyond the usual
truncation rule~\cite{abhayapala2010spherical}. % [S05]
The resulting retained model contains \(Q=49\) modal coefficients. % [S06]
Let \(T=\{2,3,4\}\) and \(d_T=21\) denote the target band and its
dimension, respectively. % [S07]
The retained nuisance complement contains degrees \(0\)--\(1\) and
\(5\)--\(6\). % [S08]

The remaining simulation parameters are summarized in
Table~\ref{tab:num_setup}. % [S09]
The candidate access library consists of equal numbers of deterministic
Fibonacci directions in a spherical cap and over the full sphere. % [S10]
As a result, the clustered and distributed library components have the
same cardinality, and no library-size bias is introduced. % [S11]
The cap half-angles are configured to be consistent with the order of
angular extents reported in clustered indoor propagation
studies~\cite{poon2005dof}. % [S12]
However, these angles are not interpreted as a particular measured RMS
angular spread. % [S13]

\begin{table}[t]
\centering
\caption{Simulation parameters and configuration.}
\label{tab:num_setup}
\setlength{\tabcolsep}{4pt}
\begin{tabular}{l|l}
\hline
Parameter & Value \\
\hline
Operating point & $kr=5$ \\
Retained model & $L=6$, $Q=49$ \\
Target band & $T=\{2,3,4\}$, $d_T=21$ \\
Candidate library & $N_{\rm cap}=N_{\rm dist}=98$ \\
Cap half-angle & $10^\circ,15^\circ,20^\circ,25^\circ,30^\circ$ \\
Access contrast & $0,5,10,15,20$ dB \\
Certificate floor & $\tau\in\{0.5,1,2\}$ \\
Displayed budgets & $K=0,\ldots,98$ \\
Posterior trials & $N_{\rm post}=50000$ \\
Random permutations & $N_{\rm rand}=2500$ \\
\hline
\end{tabular}
\end{table}

Each candidate access is normalized by its target-subspace gain.
Distributed accesses are assigned unit target gain, whereas cap
accesses are assigned a \(\Delta_{\rm acc}\)-dB target-gain advantage:
\[
\|\Pi_T g_e\|_2^2=\rho_e,\qquad
\rho_e=
\begin{cases}
10^{\Delta_{\rm acc}/10}, & e \text{ in the cap},\\
1, & e \text{ in the distributed set}.
\end{cases}
\]

For each selector \(a\), let \(S_a(K)\) denote the first \(K\) accesses
on its deterministic path. % [S19]
We next define the nuisance-robust first-crossing budget of selector
\(a\) as follows. % [S20]
\begin{equation}
K_{T|c}^{(a)}(\tau)
=
\min\left\{
K\le |\mathcal V|:
\mu_{T|c}\!\left(S_a(K)\right)\ge\tau
\right\}.
\label{eq:num_selector_crossing_budget}
\end{equation}
For comparison purposes, we first verify that the full access library
reaches the requested certificate floor. % [S21]
If the full library does not reach the requested floor, the configuration
is labeled infeasible and is not assigned to either selector. % [S22]

\subsection{Core Certificate Validation}
\label{subsec:num_core_evidence}

To evaluate the claims most directly related to the theory,
Fig.~\ref{fig:num_core_evidence} presents the core numerical checks. % [S23]
Panels~(a)--(c) use the representative configuration, where
\(\theta_{\rm cap}=25^\circ\) and
\(\Delta_{\rm acc}=10\) dB. % [S24]

Panel~(a) shows the three certificate levels at every integer budget,
where the observed ordering is % [S25]
\begin{equation}
\mu_{T|c}(S)\le \mu_T(S)\le \mu_{\rm deg}(S).
\label{eq:num_observed_nesting}
\end{equation}
This ordering confirms that degree-wise admission can be reached before
target-subspace admission, and target-subspace admission can be reached
before nuisance-robust admission. % [S26]

To evaluate the posterior interpretation of the Schur certificate,
panel~(b) presents the worst-direction target EVM. % [S27]
The predicted target EVM is given by the following expression. % [S28]
\begin{equation}
{\rm EVM}_{\rm th}(K)
=
\left[
1+\mu_{T|c}\!\left(S(K)\right)
\right]^{-1/2}.
\label{eq:num_theoretical_evm}
\end{equation}
We evaluate the theoretical posterior-EVM curve at all displayed
budgets, i.e., \(K=0,\ldots,98\). % [S1]
To verify the theoretical prediction, we conduct \(5\times10^4\)
nuisance-active Gaussian Monte Carlo trials at \(25\) representative
budgets, including those in the neighborhoods of the certificate
crossings. % [S2]
The empirical Monte Carlo results validate the theoretical curve
within the reported uncertainty. % [S3]

Panel~(c) plots the nuisance-robust admission probabilities obtained from
every prefix of \(2500\) fixed-seed random permutations. % [S32]
These admission probabilities are used to determine how often an
unstructured access prefix satisfies the Schur certificate, rather than
to measure the performance of a tuned selector. % [S34]
Pointwise binomial confidence bands are shown, where the bands describe
the uncertainty of the random-prefix admission probabilities. % [S35]
For $\tau = 1, 2$, it can be seen that none of the displayed random prefixes
reaches the requested floor. % [S36]

Panel~(d) evaluates the aggregate-objective separation result at
\(\ell=3\) and \(K=2\ell+1=7\). % [S37]
For this numerical diagnostic, we use a \(121\)-point logarithmic sweep
normalized by
\(\rho_\star=\max\{\rho_{\rm tr},\rho_{\log\det},\rho_{\rm cd}\}\). % [S38]
Here, \(\rho_{\rm tr}\), \(\rho_{\log\det}\), and \(\rho_{\rm cd}\)
denote the clustered-to-distributed crossing contrasts for trace,
degree-restricted log-det, and DFT-codebook distance, respectively. % [S39]
When \(\rho/\rho_\star\ge1\), the clustered construction exceeds the
distributed reference in all three aggregate metrics, while its modal
minimum-eigenvalue floor remains zero. % [S40]
This result confirms that aggregate-objective improvement does not
establish modal admission. % [S41]

\begin{figure*}[t]
\centering
\includegraphics[width=0.99\textwidth]
{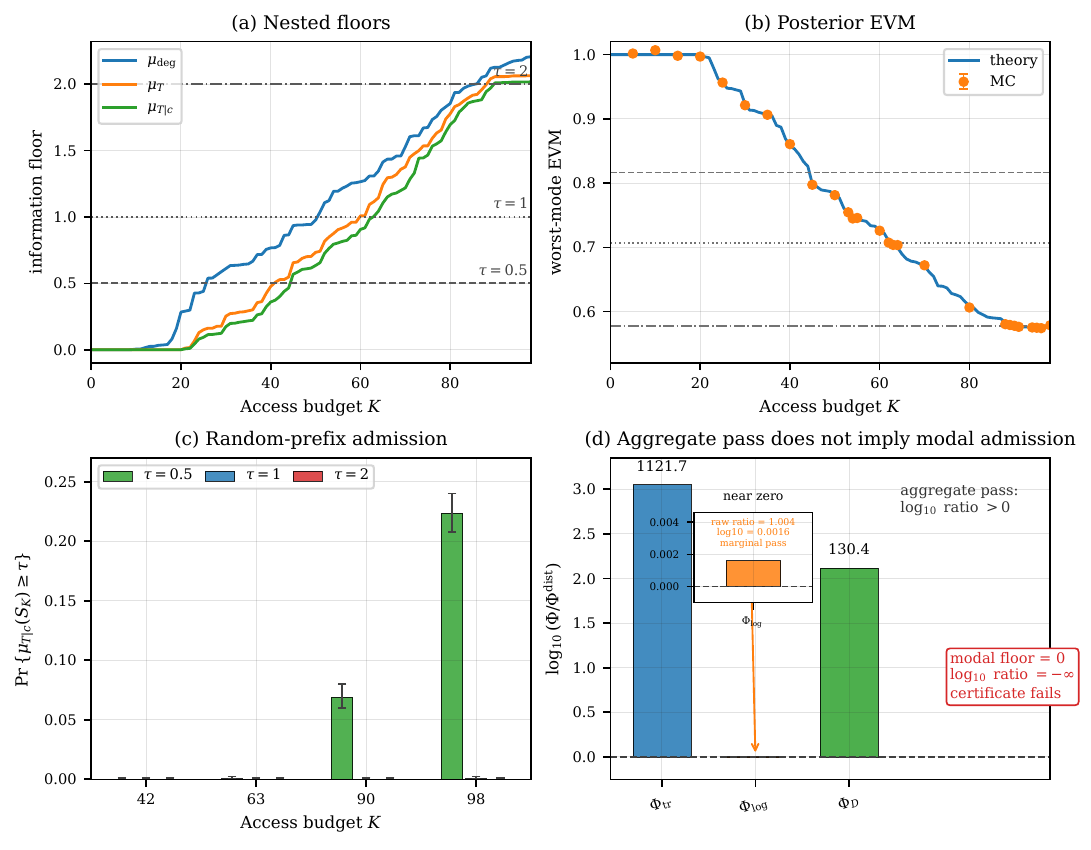}
\caption{Core finite-SNR admission evidence, where panel~(a) shows the
nested degree-wise, target-subspace, and Schur-conditioned floors,
panel~(b) compares the Schur-predicted and empirical worst-direction EVM,
panel~(c) reports random-prefix nuisance-robust admission probabilities,
and panel~(d) shows that aggregate metrics can improve while the modal
minimum-eigenvalue certificate remains unsatisfied.} % [S42]
\label{fig:num_core_evidence}
\end{figure*}

\subsection{Threshold-Dependent First-Crossing Performance}
\label{subsec:num_selector_sensitivity}

To compare the first-crossing budgets of Schur-first and global log-det,
we define \(\Delta K\) as follows. % [S43]
\begin{equation}
\Delta K
=
K_{T|c}^{({\rm Schur})}
-
K_{T|c}^{({\rm logdet})}.
\label{eq:num_crossing_difference}
\end{equation}
In particular, \(\Delta K<0\), \(\Delta K>0\), and
\(\Delta K=0\) indicate an advantage for Schur-first, an advantage for
global log-det, and a tie, respectively. % [S44]
Figure~\ref{fig:num_selector_sensitivity} plots \(\Delta K\) over the
complete \(5\times5\) cap-angle--contrast grid, where each panel
corresponds to one certificate floor. % [S45]
A hollow marker indicates that at least one crossing occurs beyond the
displayed budget \(K_{\max}=98\), while the corresponding deterministic
path is continued to obtain the crossing value. % [S46]

In general, global log-det reaches the two moderate certificate floors
with fewer accesses. % [S47]
For \(\tau=0.5\), the numbers of configurations favoring Schur-first,
yielding a tie, and favoring global log-det are \(1\), \(3\), and \(21\),
respectively. % [S48]
For \(\tau=1\), the corresponding numbers are \(2\), \(1\), and \(22\),
respectively. % [S49]
In contrast, for \(\tau=2\), the corresponding numbers are \(16\), \(3\),
and \(6\), respectively. % [S50]

The grid-wide median of \(\Delta K\) changes from \(+6\) at both
\(\tau=0.5\) and \(\tau=1\) to \(-4\) at \(\tau=2\). % [S51]
For the representative configuration, global log-det reaches
\(\tau=1\) at \(K=54\), compared with \(K=63\) for Schur-first, whereas
Schur-first reaches \(\tau=2\) at \(K=90\), compared with \(K=95\) for
global log-det. % [S52]
The numerical results show that first-crossing efficiency depends on the
prescribed certificate floor. % [S53]
Global log-det is generally more access-efficient at moderate floors,
while direct maximization of the Schur certificate becomes more
effective under a stringent worst-direction requirement. % [S54]
Although trajectory dominance at larger budgets may be relevant, we do
not focus on it in this first-crossing comparison. % [S55] 

\begin{figure*}[t]
\centering
\includegraphics[width=0.99\textwidth]
{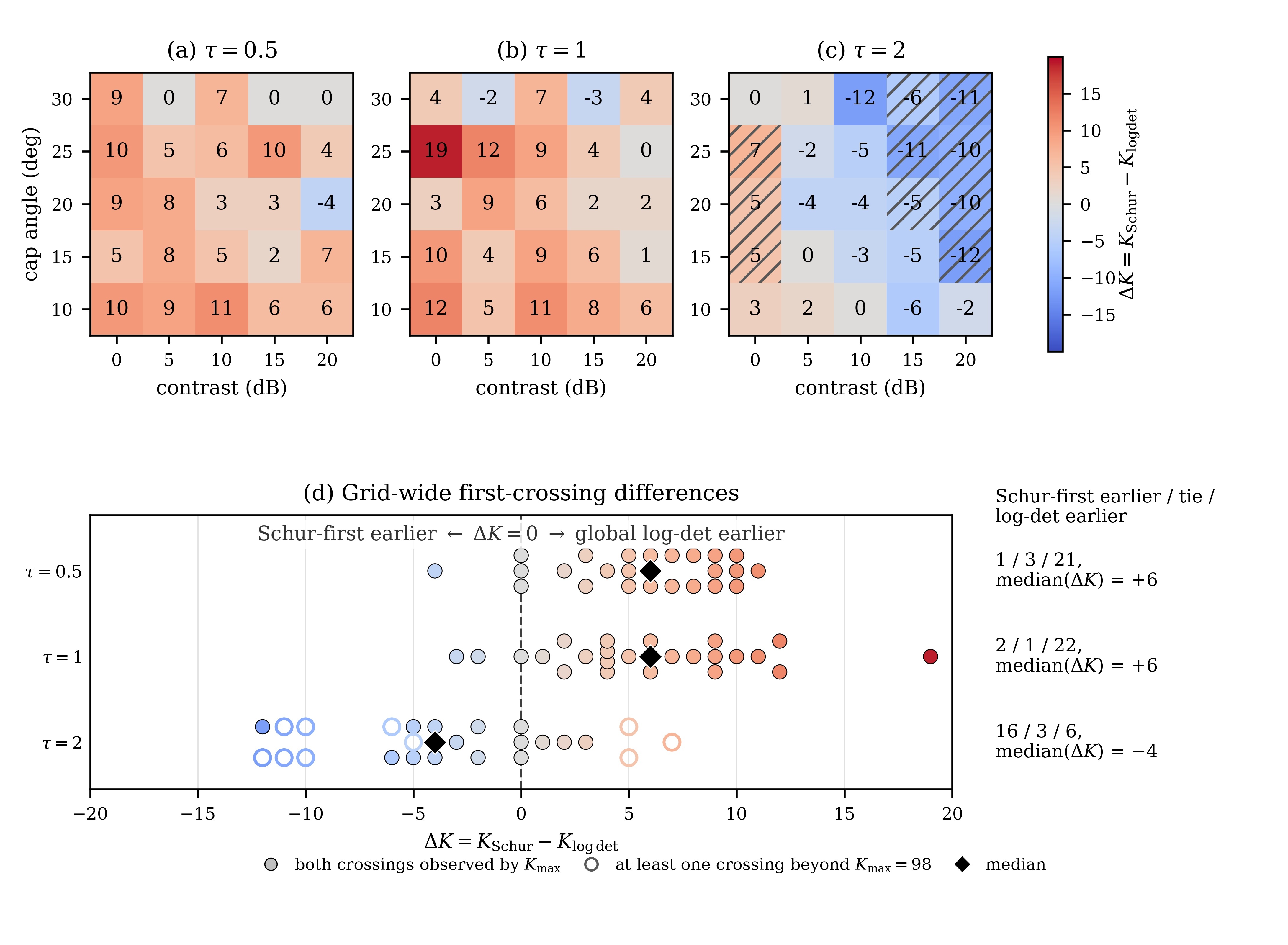}
\caption{Threshold-dependent first-crossing performance, where
panels~(a)--(c) show
\(\Delta K=K_{\rm Schur}-K_{\log\det}\) for
\(\tau=0.5\), \(1\), and \(2\), and panel~(d) summarizes the
grid-wide distributions with black diamonds indicating their medians.} % [S56]
\label{fig:num_selector_sensitivity}
\end{figure*}

\section{Conclusion}

In conclusion, the key distinction between receiver selection and modal feasibility is that
the two procedures address different design tasks. % [S1]
A selection objective is used to determine the order in which candidate
measurements are activated. % [S2]
However, the selection objective alone does not enable the conclusion that
the prescribed modal subspace is recoverable. % [S3]
To this end, modal operation is gated by a worst-direction finite-SNR
requirement. % [S4]
Aggregate objectives can then be applied only to construct or refine a
candidate subset. % [S5]

If the full access library fails the required certificate, i.e., its
certificate value is below the prescribed floor, then the failure represents
a system-level limitation and no subset selector over the same monotone
library can satisfy that tier. % [S6]
In contrast, when the full access library is feasible, the remaining problem
focuses on access-efficient selection, where the preferred rule depends on
the required reliability level. % [S7]
Although universal selector optimality and calibrated HMIMO/XL-MIMO
performance are important, we do not focus on them in the normalized
spherical-wave study, instead, we use the study to illustrate these design
principles. % [S8]

% ===========================
% Appendices
% ===========================

\appendices
\small

\section{Proofs for Finite-SNR Degree Certification}
\label{app:certification_proofs}

\subsection{Proof of the Monotonicity Lemma}
\label{app:proof_monotonicity}
\begin{IEEEproof}
First, consider the case with independent access blocks. Then
\begin{equation}
    G(S')-G(S)
    =
    \sum_{e\in S'\setminus S}B_e
    \succeq0.
\end{equation}

For the correlated Gaussian case, set \(S'=S\cup\mathcal D\), where \(\mathcal D=S'\setminus S\). Partition the selected observation into \((y_S,y_{\mathcal D})\). The Gaussian likelihood can be factored as
\begin{equation}
    p(y_{S'}\mid x)
    =
    p(y_S\mid x)\,
    p(y_{\mathcal D}\mid y_S,x).
\end{equation}
The conditional covariance of \(y_{\mathcal D}\) given \(y_S\) is the Schur complement
\begin{equation}
    R_{\mathcal D|S}
    =
    R_{\mathcal D\mathcal D}
    -
    R_{\mathcal D S}R_{SS}^{-1}R_{S\mathcal D}
    \succ0,
\end{equation}
and the corresponding conditional sensing matrix is
\begin{equation}
    A_{\mathcal D|S}
    =
    A_{\mathcal D}
    -
    R_{\mathcal D S}R_{SS}^{-1}A_S.
\end{equation}
The resulting information increment is
\begin{equation}
    F(S')-F(S)
    =
    A_{\mathcal D|S}^{\mathsf H}
    R_{\mathcal D|S}^{-1}
    A_{\mathcal D|S}
    \succeq0.
\end{equation}
Premultiplication and postmultiplication by \(C_x^{1/2}\) give
\begin{equation}
    G(S')-G(S)
    =
    C_x^{1/2}
    \bigl(F(S')-F(S)\bigr)
    C_x^{1/2}
    \succeq0.
\end{equation}
\end{IEEEproof}

\subsection{Proof of Corollary~\ref{cor:full_library_infeasibility}}
\label{app:proof_full_library}

\begin{IEEEproof}
 By Lemma~\ref{lem:monotone_growth}, \(G(S)\preceq G(V)\) for every
\(S\subseteq V\). Hence, the Rayleigh--Ritz characterization gives
\[
\mu_\ell(S)\le \mu_\ell(V),
\qquad
\mu_T(S)\le \mu_T(V).
\]
Moreover, Loewner monotonicity of inversion yields
\[
\big[(I+G(S))^{-1}\big]_{TT}
\succeq
\big[(I+G(V))^{-1}\big]_{TT}.
\]
Using
\[
\big[(I+G(\cdot))^{-1}\big]_{TT}
=
\big(I+I_{T|c}(\cdot)\big)^{-1},
\]
and inverting the positive-definite target blocks gives
\[
I_{T|c}(S)\preceq I_{T|c}(V),
\]
and therefore
\[
\mu_{T|c}(S)\le \mu_{T|c}(V).
\]
Thus, if the full-library value of any certificate is below
\(\tau\), no subset can satisfy that certificate. The
\(F\)-monotone statement follows identically.
\end{IEEEproof}

\subsection{Proof of Theorem~\ref{thm:multitier-admission}}
\label{app:proof_multitier_admission}

Write the source-whitened observation as \(\bar y=H_Ts_T+H_cs_c+\bar n\), where \(s_T\) and \(s_c\) are independent \(\mathcal{CN}(0,I)\) variables and \(\bar n\sim\mathcal{CN}(0,I)\). The full posterior precision of \((s_T,s_c)\) equals \(I+G(S)\). The \(T\)-block of the posterior covariance is obtained by inverting the Schur complement of the \(c\)-block in \(I+G(S)\), namely
\[
\Sigma_{T|y}(S)=\left(I+G_{TT}-G_{Tc}(I+G_{cc})^{-1}G_{cT}\right)^{-1}.
\]
This identity gives the posterior formula and its equivalence to largest posterior target variance at most \((1+\tau)^{-1}\). Because \(G_{Tc}(I+G_{cc})^{-1}G_{cT}\succeq0\), \(\mu_{T|c}\le \mu_T\). The inequality \(\mu_T\le\mu_\ell\) for every \(\ell\in T\) follows from Cauchy interlacing, and nonnegativity follows from Schur-complement positivity. \(\square\)

\subsection{Proof of Corollary~\ref{cor:coupling-tax}}
\label{app:proof_coupling_tax}

For \(u=\oplus_{\ell\in T}u_\ell\in V_T\), bound the diagonal block terms of the Rayleigh quotient from below by \(\mu_\ell\|u_\ell\|^2\) and the cross terms by the operator-norm triangle inequality. The block-Gershgorin bound follows. Applying Weyl's inequality to \(I_{T|c}=G_{TT}-G_{Tc}(I+G_{cc})^{-1}G_{cT}\) gives the Schur bound. Budget nesting is obtained from the feasible-set inclusions in Theorem~\ref{thm:multitier-admission}. \(\square\)

\subsection{Proof of Theorem~\ref{thm:aggregate-certification-separation}}
\label{app:proof-aggregate-separation}

It is sufficient to prove the repeated identical-response limit. Distinct physical accesses are then obtained through a small-cap continuity argument.

Denote by \(y_\ell(\Omega)\) the degree-\(\ell\) spherical-harmonic evaluation vector. Choose \(K\) directions \(\Omega_1,\ldots,\Omega_K\) for which
\begin{equation}
\Gamma_{\rm sp}=\sum_{i=1}^K y_\ell(\Omega_i)y_\ell(\Omega_i)^H\succ0.
\end{equation}
Such a set exists because \(K\ge d_\ell\). Choose \(w_{\rm sp}>0\) such that
\begin{equation}
P_\ell w_{\rm sp}\lambda_{\min}(\Gamma_{\rm sp})\ge \tau .
\end{equation}
Then degree \(\ell\) is certified by \(S_{\rm sp}\).

Next, select a direction \(\Omega_0\) such that
\begin{equation}
\Delta^H y_\ell(\Omega_0)\ne0,
\qquad \forall\Delta\in\mathcal D .
\end{equation}
For each nonzero \(\Delta\), \(\Delta^H y_\ell(\Omega)\) is a nonzero degree-\(\ell\) spherical harmonic. The zero set of this harmonic has surface measure zero, and finitely many such zero sets cannot cover the sphere. Set
\begin{equation}
\gamma_{\mathcal D}=\min_{\Delta\in\mathcal D}|\Delta^H y_\ell(\Omega_0)|^2>0 .
\end{equation}
Construct \(S_{\rm cl}\) with \(K\) accesses having the same angular response \(y_\ell(\Omega_0)\) and common weight \(w_{\rm cl}\). The degree-restricted Gram is
\begin{equation}
G_\ell(S_{\rm cl})=P_\ell K w_{\rm cl} y_\ell(\Omega_0)y_\ell(\Omega_0)^H .
\end{equation}
This rank-one matrix satisfies
\begin{equation}
\mu_\ell(S_{\rm cl})=\lambda_{\min}(G_\ell(S_{\rm cl}))=0 .
\end{equation}
However,
\begin{align}
\Phi_{\rm tr}(S_{\rm cl})
&=P_\ell K w_{\rm cl}\|y_\ell(\Omega_0)\|_2^2,\\
\Phi_{\log}(S_{\rm cl})
&=\log_2\left(1+P_\ell K w_{\rm cl}\|y_\ell(\Omega_0)\|_2^2\right),\\
\Phi_{\mathcal D}(S_{\rm cl})
&=P_\ell K w_{\rm cl}\gamma_{\mathcal D}.
\end{align}
As \(w_{\rm cl}\to\infty\), all three aggregate quantities diverge while the distributed-subset quantities remain fixed. Select \(w_{\rm cl}\) so that all three gaps exceed \(R\), while \(\mu_\ell(S_{\rm cl})=0\).

For distinct accesses, select \(w_{\rm cl}\) with strict margin larger than \(R\). When a cap around \(\Omega_0\) shrinks, the finite angular Gram and aggregate objectives converge to the repeated-access limit. Hence, for a sufficiently small cap, the smallest eigenvalue is below any prescribed \(\epsilon>0\), and the aggregate inequalities remain valid. \(\square\)

\section{Proofs for Wave-Receiver Finite-Layout Certificates}
\label{app:receiver_design_proofs}

\subsection{Proof of Proposition~\ref{thm:weighted_wave_certificate}}
\label{app:proof_weighted_wave_certificate}

Stacking \eqref{eq:weighted_wave_sample} over \(e\in S\) yields the selected sampling matrix \(A_S\). Under independent noise,
\(G(S)=\sum_{e\in S}C_x^{1/2}A_e^{\mathsf H}\sigma_e^{-2}A_eC_x^{1/2}\).
Restricting to \(V_\ell\) and using \(C_x^{1/2}|_{V_\ell}=\sqrt{P_\ell}I\) yield the compressed Gram
\[
\Pi_\ell G(S)\Pi_\ell|_{V_\ell}
=P_\ell\sum_{e\in S}
\frac{|a_\ell^{\rm wav}(e)|^2}{\sigma_e^2}
y_\ell(\Omega_e)y_\ell(\Omega_e)^{\mathsf H}.
\]
Taking the smallest eigenvalue yields \eqref{eq:weighted_wave_certificate}, and division by \(P_\ell>0\) yields the certification condition.

\subsection{Proof of Corollary~\ref{cor:separable_wave_law}}
\label{app:proof_separable_wave_law}

Substituting the separable Gram into Proposition~\ref{thm:weighted_wave_certificate} factors out the nonnegative scalar \(B_{M,\ell}(S)/\sigma_M^2\), which yields \eqref{eq:finite_layout_margin}. Solving \(\mu_\ell(S)\ge\tau\) yields \eqref{eq:exact_access_energy_threshold}.

\subsection{Proof of Proposition~\ref{prop:residual_labels}}
\label{app:proof_residual_labels}

Substitution yields \(u^{\mathsf H}G(S)u=\rho_\ell^{(0)}(S)+(P_\ell/\sigma_M^2)B_{M,\ell}(S)u^{\mathsf H}E_\ell u\). Because \(|u^{\mathsf H}E_\ell u|\le\epsilon_\ell\), every unit Rayleigh quotient lies in \(I_\ell(S)\). The certificate labels and full-set infeasibility statement then follow.

\subsection{Proof of Lemma~\ref{thm:angular_efficiency}}
\label{app:proof_angular_efficiency}

By the addition theorem, \(\|y_\ell(\Omega)\|_2^2=(2\ell+1)/(4\pi)\), so \(\operatorname{tr}(\Gamma_\ell)=N_{\rm sens}(2\ell+1)/(4\pi)\) and the average eigenvalue is \(N_{\rm sens}/(4\pi)\). Therefore, \(0\le\alpha_\ell\le1\), with equality if and only if \(\Gamma_\ell=(N_{\rm sens}/4\pi)I_{2\ell+1}\). Substitution into Corollary~\ref{cor:separable_wave_law} yields \eqref{eq:angular_efficiency_margin}; solving yields \eqref{eq:angular_efficiency_threshold}.

\subsection{Proof of Theorem~\ref{thm:access-energy-converse}}
\label{app:proof_access_energy_converse}
The degree-\(\ell\) weighted angular Gram consists of \(|S|\) rank-one positive semidefinite matrices. To obtain a positive smallest eigenvalue in dimension \(d_\ell\), one needs \(|S|\ge d_\ell\). Define \(M_\ell(S)=\sum_{e\in S}w_{\ell,e}y_\ell(\Omega_e)y_\ell(\Omega_e)^H\). Since \(\|y_\ell(\Omega)\|_2^2=d_\ell/(4\pi)\),
\[
\lambda_{\min}\big(M_\ell(S)\big)
\le \frac{1}{d_\ell}\operatorname{tr}M_\ell(S)
=\frac{1}{4\pi}\sum_{e\in S}w_{\ell,e}.
\]
Certification requires \(P_\ell\lambda_{\min}\ge\tau\), which yields the transfer-energy bound. The bounded-access lower bound follows directly. For target admission, each scalar access contributes at most one rank, and therefore a positive floor on \(d_T\) dimensions requires \(|S|\ge d_T\). The rank of the Schur information \(I_{T|c}=H_T^H(I+H_cH_c^H)^{-1}H_T\) is also at most \(|S|\). 

\subsection{Proof of Proposition~\ref{thm:random_layout}}
\label{app:proof_random_layout}

For uniform \(\Omega\), orthonormality implies \(\mathbb E[y_\ell(\Omega)y_\ell(\Omega)^{\mathsf H}]=(1/4\pi)I_{2\ell+1}\). The summands \(X_i=y_\ell(\Omega_i)y_\ell(\Omega_i)^{\mathsf H}\) are positive semidefinite and satisfy \(\lambda_{\max}(X_i)=(2\ell+1)/(4\pi)\), and \(\mathbb E[\Gamma_\ell]=(N_{\rm sens}/4\pi)I\). Matrix Chernoff~\cite{tropp2012tail} gives \eqref{eq:random_layout_chernoff}; Corollary~\ref{cor:separable_wave_law} gives the corresponding margin.

\section{Proof of Proposition~\ref{thm:operational_metrics}}
\label{app:proof_operational_metrics}

If \(\mu_\ell(S)\ge\tau\), all compressed degree-Gram eigenvalues are at least \(\tau\), which yields the log-det, posterior-variance, and EVM bounds. For \(\Delta s\in V_\ell\), the distance satisfies \(d^2(S)=\Delta s^{\mathsf H}G(S)\Delta s\ge\tau\|\Delta s\|_2^2\). Substituting this inequality into \(P(i\rightarrow j)\le \frac12\exp(-d^2(S)/4)\) yields the ML bound.

Applying the same argument to \(G_{TT}(S)\) proves the target-only bounds. Applying it to \(I_{T|c}(S)\), together with Theorem~\ref{thm:multitier-admission}, gives \(\log_2\det(I+I_{T|c}(S))\ge d_T\log_2(1+\tau)\) and \(\Sigma_{T|y}(S)\preceq(1+\tau)^{-1}I_{d_T}\).

\bibliographystyle{IEEEtran}
\bibliography{references}

\vfill
\end{document}